\documentclass[a4paper,11pt]{article}

\usepackage{jheppub} 

\usepackage[T1]{fontenc} 

\usepackage{graphicx} 
\usepackage{amsthm}

\usepackage{bbm}

\newtheorem{thm}{Theorem}[section]

\newtheorem{defn}[thm]{Definition}

\newcommand{\abs}[1]{\left| #1 \right|} 
\newcommand{\ket}[1]{| #1 \rangle } 
\newcommand{\bra}[1]{\langle  #1 |} 
\newcommand{\braket}[2]{\langle  #1 \vphantom{#2} | #2 \vphantom{#1} \rangle } 
\newcommand{\ketbra}[2]{|  #1 \vphantom{#2} \rangle \langle #2 \vphantom{#1} | } 
\newcommand{\diracprod}[3]{\left\langle  #1 \vphantom{#2#3} \right|#2 \left| #3 \vphantom{#1#2} \right\rangle } 
\let\baraccent=\= 
\renewcommand{\=}[1]{\stackrel{#1}{=}} 

\newcommand{\im}{\text{Im}}

\def\A{\mathcal{A}}
\def\B{\mathcal{B}}
\def\F{\mathcal{F}}

\def\O{\mathcal{O}}
\def\P{\mathcal{P}}

\def\L{\mathcal{L}}

\def\H{\mathcal{H}}

\def\C{\mathbb{C}}
\def\Rbb{\mathbb{R}}
\def\Tr{\text{Tr}}

\def\Srel{S_{\text{rel}}}
\def\id{\mathbbm{1}}

\title{Semiclassical algebraic reconstruction for type III algebras}

\author[a]{Haocheng Zhong}

\affiliation[a]{Shing-Tung Yau Center and School of Physics, Southeast University, Nanjing 210096, China}

\emailAdd{zhonghaocheng@seu.edu.cn}

\abstract{In this work, we address the unresolved type III cases of the algebraic reconstruction theorem by integrating crossed product algebras and semiclassical approximations. We first derive that the relative entropy in crossed product algebras factorizes into contributions from the original algebra and observer wavefunctions. By constructing ``holographic'' crossed product algebras for ``bulk'' and ``boundary'' type III factors, we extend the algebraic reconstruction theorem to include the algebraic Ryu-Takayanagi (RT) formula semiclassically, which provides a complete algebraic description of the reconstruction theorem, as an intrinsic framework for the algebraic version of bulk-boundary correspondences in holographic duality.}

\begin{document} 
\maketitle
\flushbottom

\section{Introduction}

In recent years, the application of von Neumann algebra in quantum gravity has made significant progress, particularly in understanding the nature of entanglement structure of the spacetime \cite{Witten:2021unn,Chandrasekaran:2022cip,Jensen:2023yxy,Kudler-Flam:2023hkl,Colafranceschi:2023moh,Faulkner:2024gst,Kudler-Flam:2023qfl,Kudler-Flam:2024psh,AliAhmad:2023etg,Leutheusser:2021qhd,Leutheusser:2021frk,Leutheusser:2022bgi,Engelhardt:2023xer}. Essentially, the divergent property of the entanglement entropy in a quantum field theory is due to the fact that the theory is described by a type III$_1$ algebra \cite{Yngvason:2004uh,Fredenhagen:1984dc,Araki:1963klf,Araki:1964lyc}, which does not admit a well-defined notion of von Neumann entropy \cite{Witten:2018zxz}. Nevertheless, it has been demonstrated that in a quantum theory coupled to gravity, one is able to perform an algebraic transition from a type III$_1$ algebra to a type II$_\infty$ algebra, which does admit a well-defined von Neumann entropy, via the crossed product construction \cite{Witten:2021unn,Chandrasekaran:2022eqq,Chandrasekaran:2022cip,Penington:2023dql,Kolchmeyer:2023gwa}. Such construction is extensively used in black hole physics where under certain geometric constraints, the generalized entropy and the generalized second law are formalized in the algebraic language \cite{Jensen:2023yxy,Kudler-Flam:2023qfl,Faulkner:2024gst,Klinger:2026tws}. See also \cite{Lin:2022rbf,Witten:2023qsv,DeVuyst:2024khu,Sorce:2024zme,Fewster:2024pur,DeVuyst:2024fxc,Kudler-Flam:2024psh,Xu:2024hoc,AliAhmad:2024wja,Bahiru:2022oas,Klinger:2023auu,Gomez:2023wrq,AliAhmad:2024eun,vanderHeijden:2024tdk,AliAhmad:2024saq,AliAhmad:2025oli,Klinger:2025tvg,Klinger:2026tws,Klinger:2026kqj,Aguilar-Gutierrez:2023odp,Espindola:2026ekv} for a incomplete list of relevant developments regarding algebraic approach in gravity.

In the context of the AdS/CFT correspondence \cite{Maldacena:1997re,Witten:1998qj,Gubser:1998bc}, which conjectures an equivalence between $d-$dimensional quantum gravity on anti-de Sitter spacetime and $(d-1)-$dimensional conformal field theory, the fact that a type III algebra does not admit a well-defined von Neumann entropy implies the famous Ryu-Takayanagi (RT) formula \cite{Ryu:2006bv,Hubeny:2007xt,Casini_2011,Lewkowycz:2013nqa,Nishioka:2018khk,Faulkner:2013ana,Engelhardt:2014gca} is in fact mathematically ill-defined for the lack of formal definition of entropies both in the bulk and on the boundary. The RT formula is stated as 
\begin{equation}\label{eq:RT}
	S(\rho_A)=\mathcal{L}_A+S(\tilde{\rho}_a),
\end{equation}
provided a density operator $\rho_{A}$ on the boundary subregion $A$ and the dual density operator $\tilde{\rho}_a$ in the entanglement wedge $a$, and $\mathcal{L}_A$ is the area of the Ryu-Takayanagi surface of $A$ over $4G$. The RT formula is so essential that in the holographic framework, the formula itself is equivalent to the following three statements according to a series of works \cite{Almheiri_2015,Dong:2016eik,Harlow:2016vwg}\footnote{See also appendix A of \cite{Zhong:2024fmn} for a summary proof, and \cite{Faulkner:2020hzi,Akers:2021fut,Gesteau:2023hbq} for an incomplete list of related developments.}, which we call the reconstruction theorem:
\begin{itemize}
	\item Entanglement wedge reconstruction (or subregion duality) \cite{Czech:2012bh,Hamilton:2006az,Morrison:2014jha,Bousso:2012sj,Bousso:2012mh,Hubeny:2012wa,Wall:2012uf,Headrick:2014cta,Jafferis:2015del,Dong:2016eik}, establishes a correspondence between a boundary subregion and its bulk entanglement wedge, where any bulk operator contained within the entanglement wedge can be reconstructed from information residing on the designated boundary subregion;
	
	\item The Jafferis--Lewkowycz--Maldacena--Suh (JLMS) formula \cite{Jafferis:2015del} establishes a duality between boundary and bulk descriptions of quantum information. Consider two density operators, $ \rho_{A} $ and $ \sigma_{A} $, defined on a boundary subregion $ A $, whose associated entanglement wedge is denoted by $ a $. The relation states that the quantum relative entropy evaluated on the boundary coincides with its bulk counterpart:
	\begin{equation}
		\Srel(\rho_A|\sigma_A)=\Srel(\tilde{\rho}_a|\tilde{\sigma}_a),
	\end{equation}
	where $ \Srel $ denotes the quantum relative entropy, and $ \tilde{\rho}_{a} $ and $ \tilde{\sigma}_{a} $ are the bulk dual density operators corresponding to $ \rho_{A} $ and $ \sigma_{A} $, respectively;
	
	\item Radial commutativity \cite{Polchinski:1999yd,
		Almheiri_2015,Harlow:2018fse}, which states that bulk operators at a fixed bulk time-slice commute with all boundary operators localized at the boundary of that time-slice.

\end{itemize}

The equivalence of the above three statements are generalized into algebraic descriptions in \cite{Kamal:2019skn,Kang:2018xqy,Kang:2019dfi,Xu:2024xbz,Crann:2024gkv} (we call it the algebraic reconstruction theorem), but the inclusion of the RT formula is problematic due to the subtleties discussed previously. However, for the cases of type I/II algebras, such inclusion is plausible as proved in \cite{Xu:2024xbz}, leaving only the type III cases, where the algebras of a generic quantum field theory belong to, unresolved. On the other hand, applying semiclassical approximation is a common tool in studying quantum gravity. Especially in the context of holographic framework, the semiclassical bulk geometry should be consistent with the RT formula and the JLMS formula \cite{Lashkari:2013koa,Faulkner:2013ica,Swingle:2014uza,Faulkner:2017tkh,Lewkowycz:2018sgn,Jafferis:2015del}, such that it is intriguing to investigate how the semiclassical approximation can be adopted into the algebraic reconstruction theorem to resolve the type III issues. In this work, motivated by the success of the crossed product construction to black hole physics, we apply the construction to the holographic framework and impose the semiclassical approximation introduced in \cite{Chandrasekaran:2022cip,Chandrasekaran:2022eqq,Jensen:2023yxy}. The main results of the work are listed in the following:
\begin{itemize}
	\item In the semiclassical description, we prove that the relative entropy in the crossed product algebra factorizes into a sum of two terms, one of which is the relative entropy of the original algebra while the other corresponds to the relative entropy between two observers' wavefunctions:
	\begin{equation}
		\begin{aligned}
			\text{Semiclassically,}\quad\Srel(\Psi_f|\Phi_g;\widehat{\A}_{\Omega})\approx\Srel(\Psi|\Phi;\A)+\Srel(f|g).
		\end{aligned}
	\end{equation}
	
	\item Subsequently, for type III cases one can construct the crossed product algebras both for the code space and the physical space, which we call the holographic crossed product algebras, such that the algebraic reconstruction theorem admits the refined version of the algebraic RT formula. As a result, the complete algebraic reconstruction theorem which gives a better analog for the holographic setting is as follows,

	\begin{thm}\label{thm:complete alg}
		We start with the following algebraic setup:
		\begin{itemize}
			\item Let $\A_{code},\A_{phys}$ be generic von Neumann factors on $\H_{code},\H_{phys}$ respectively, with $\A'_{code},\A'_{phys}$ respectively being the commutants. Let $V: \H_{code}\rightarrow \H_{phys}$ be an isometry which induces von Neumann factors $V\A_{code}V^\dagger,V\A'_{code}V^\dagger$ on the image of $V$ denoted as $\im(V)=V(\H_{code})\subset \H_{phys}$.

			\item Notations: vectors in $\H_{code}$ and operators in $\A_{code}$ are labeled by the tilde sign, and operators in commutants are labeled by the prime sign. For example, $\ket{\widetilde{\Psi}}\in\H_{code}$, $\widetilde{\O}\in \A_{code},\widetilde{\O}'\in\A'_{code}$. 
			
			\item Consider a cyclic and separating state $\ket{\widetilde{\Omega}}\in\H_{code}$, we construct the crossed product algebras:
			\begin{equation}
				\widehat{\A}_{code;\widetilde{\Omega}}\equiv \A_{code} \rtimes_{\sigma^{\widetilde{\Omega}}} \mathbb{R},\quad \widehat{\A}_{phys;\Omega}\equiv \A_{phys} \rtimes_{\sigma^{\Omega}} \mathbb{R}\quad \text{where}\quad \ket{\Omega}\equiv V\ket{\widetilde{\Omega}}
			\end{equation}
			which act respectively on
			\begin{equation}
				\widehat{\H}_{code}\equiv\H_{code} \otimes L^2(\mathbb{R}),\quad \widehat{\H}_{phys}\equiv\H_{phys} \otimes L^2(\mathbb{R}).
			\end{equation} 
			We call the pair $(\widehat{\A}_{code;\widetilde{\Omega}},\widehat{\A}_{phys;\Omega})$ the holographic crossed product algebras with respect to $\ket{\Omega}$.

			\item Let $\id_{L^{2}(\Rbb)}
			:L^2(\mathbb{R})\rightarrow L^2(\mathbb{R})$ be the identity such that $\widehat{V}=V\otimes \id_{L^{2}(\Rbb)}:\widehat{\H}_{code}\rightarrow \widehat{\H}_{phys}$ is also an isometry. We use $\ket{\widetilde{f}}$ to denote the element in the $L^2(\Rbb)$ part of $\widehat{\H}_{code}$ then $\ket{f}\equiv \overline{V}\ket{\widetilde{f}}$ is the image under $\overline{V}$.
			
			\item Suppose that the set of cyclic and separating vectors w.r.t. $\A_{code}$ is dense in $\H_{code}$ ($\Leftrightarrow$ the set of cyclic and separating vectors w.r.t. $\A'_{code}$ is dense in $\H_{code}$).

			\item Suppose that if $\ket{\widetilde{\Psi}}\in\H_{code}$ is cyclic and separating w.r.t. $\A_{code}$, then $V\ket{\widetilde{\Psi}}\in\H_{phys}$ is cyclic and separating w.r.t. $\A_{phys}$.

			\item We define the semiclassical approximation as assuming both $\widetilde{f}(x)$ and $f(x)$ are slowly varying. 
			
		\end{itemize}
		
		The following statements are equivalent semiclassically: 
		\begin{enumerate}
			\item For any $\widetilde{\O}\in\A_{code},\widetilde{\O}'\in\A_{code}'$, there exist $\O\in \A_{phys},\O'\in \A_{phys}'$ such that
			\begin{equation}
				V\widetilde{\O}\ket{\widetilde{\Psi}}=\O V\ket{\widetilde{\Psi}}, V\widetilde{\O}'\ket{\widetilde{\Psi}}=\O' V\ket{\widetilde{\Psi}},\forall \ket{\widetilde{\Psi}}\in\H_{code}.
			\end{equation}
			
			\item For any $\ket{\widetilde{\Psi}},\ket{\widetilde{\Phi}}\in\H_{code}$ with $\ket{\widetilde{\Psi}},\ket{\widetilde{\Phi}}$ both cyclic and separating w.r.t. $\A_{code}$, 
			\begin{equation}
				\begin{aligned}
					&\Srel(\widetilde{\Psi}|\widetilde{\Phi};\A_{code})=\Srel(\Psi|\Phi;\A_{phys}),\\&\Srel(\widetilde{\Psi}|\widetilde{\Phi};\A'_{code})=\Srel(\Psi|\Phi;\A'_{phys}),
				\end{aligned}
			\end{equation}
			where $\ket{\Psi}\equiv V\ket{\widetilde{\Psi}},\ket{\Phi}\equiv V\ket{\widetilde{\Phi}}$.
			
			\item For any $\P \in\A_{phys},\P '\in\A_{phys}'$,
			\begin{equation}
				V^\dagger \P V\in\A_{code},\quad V^\dagger \P 'V\in\A_{code}'.
			\end{equation}

			\item

			Consider any cyclic and separating $\ket{\widetilde{\Psi}}\in\H_{code}$  w.r.t. $\A_{code}$. If $\A_{code},\A_{phys}$ are von Neumann factors of type I/II, we have
			\begin{equation}
				\begin{aligned}
					&S(\widetilde{\Psi};\A_{code})=S(\Psi;\A_{phys}),\\ &S(\widetilde{\Psi};\A_{code}')=S(\Psi;\A_{phys}').
				\end{aligned}
			\end{equation}
			If $\A_{code},\A_{phys}$ are von Neumann factors of type III, we have\footnote{The notation ``$\approx$'' is manifest by an error at $O(\epsilon)$, as we will see in the main text.}
			\begin{equation}
				\begin{aligned}
					&S(\widetilde{\Psi}_{\widetilde{f}};\widehat{\A}_{code;\widetilde{\Omega}})\approx S(\Psi_{f};\widehat{\A}_{phys;\Omega}),\\ &S(\widetilde{\Psi}_{\widetilde{f}};\widehat{\A'}_{code;\widetilde{\Omega}})\approx S(\Psi_{f};\widehat{\A'}_{phys;\Omega}).
				\end{aligned}
			\end{equation}
		\end{enumerate}

	\end{thm}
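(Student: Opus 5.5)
The proof splits into two parts. The equivalence of statements 1, 2 and 3 is the algebraic reconstruction theorem already established for general von Neumann factors in \cite{Kang:2018xqy,Kang:2019dfi,Xu:2024xbz}, under exactly the density hypothesis on cyclic and separating vectors and the hypothesis that $V$ preserves the cyclic and separating property. We invoke that result directly. For type I/II factors, the equivalence with the first half of statement 4 (the RT-type equalities of von Neumann entropies) is the result of \cite{Xu:2024xbz}, and we also take it as given. The new content is therefore only the type III case of statement 4. For it we prove $2\Rightarrow 4$ and $4\Rightarrow 2$ semiclassically, using the factorization
\[
\Srel(\Psi_f|\Phi_g;\widehat{\A}_{\Omega})\approx\Srel(\Psi|\Phi;\A)+\Srel(f|g)
\]
derived earlier in the paper.

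\textbf{Step 1: lifting the isometry.} We first show that $\widehat{V}=V\otimes\id_{L^2(\Rbb)}$ intertwines the two crossed products. The modular operator of $\ket{\Omega}=V\ket{\widetilde\Omega}$ restricted to $\im(V)$ is $V\Delta_{\widetilde\Omega}V^\dagger$. This holds once statement 1 (equivalently 2 or 3) is assumed, since Tomita operators are then intertwined: $V\widetilde S=S V$ on the dense set $\A_{code}\ket{\widetilde\Omega}$. It follows that $\widehat V\,\widetilde{\O}\,\widehat V^\dagger$ and $\widehat V e^{i\widetilde p\,\widetilde h_{\widetilde\Omega}}\widehat V^\dagger$ are reconstructed by $\O$ and $e^{ip h_\Omega}$. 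In other words, $\widehat V$ realizes an algebraic reconstruction $\widehat\A_{code;\widetilde\Omega}\to\widehat\A_{phys;\Omega}$, and likewise for the commutants. Here $\ket{\widetilde\Psi_{\widetilde f}}$ maps to $\ket{\Psi_f}$.

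\textbf{Step 2: $2\Rightarrow 4$.} The crossed products are type II$_\infty$, so entropies are defined relative to the trace state. In the semiclassical regime the density matrix of $\ket{\widehat\Psi}=\ket{\Psi}\otimes\ket{f}$ takes the standard form
\[
\rho_{\widehat\Psi}\approx \bar f(\hat h_\Omega+x)\,e^{\beta x}\,\Delta_{\Psi|\Omega}\,f(\hat h_\Omega+x),
\]
which gives
\[
S(\Psi_f;\widehat\A_\Omega)\approx-\Srel(\Psi|\Omega;\A)-\Srel\text{-type term in } f + O(\epsilon).
\]
Equivalently, we apply the factorization formula with $\Phi_g$ chosen as the tracial reference. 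We write the same expression on the code side and compare term by term. The $\A$ part matches by statement 2 applied to the pair $(\widetilde\Psi,\widetilde\Omega)$. The observer part matches because $\ket{f}=\overline V\ket{\widetilde f}$, with the identity acting on the $L^2(\Rbb)$ factor. This yields the crossed product entropy equalities up to $O(\epsilon)$. The same argument runs for the commutants via $\Delta'=\Delta^{-1}$.

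\textbf{Step 3: $4\Rightarrow 2$.} We recover relative entropies from entropy differences. Fix $\widetilde\Phi$ and build the crossed product with respect to $\widetilde\Omega=\widetilde\Phi$; the statement holds for any cyclic separating reference by assumption. Statement 4 for $\widetilde\Psi$ and for $\widetilde\Phi$ with the same slowly varying $\widetilde f$ then gives equal differences $S(\Psi_f)-S(\Phi_f)$. By Step 2's formula, these differences equal $-\Srel(\Psi|\Phi;\A)$ up to $O(\epsilon)$, with the $f$ contributions cancelling. Since the relative entropies themselves are independent of $f$ and $\epsilon$, taking $\epsilon\to0$ yields exact equality, which is statement 2.

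\textbf{Main obstacle.} The hardest part is Step 1 together with the uniformity of errors. One must ensure that the semiclassical $O(\epsilon)$ corrections on the code and physical sides are controlled by the same $f$. This is where the assumption that both $\widetilde f$ and $f=\overline V\widetilde f$ are slowly varying enters. One must also ensure that $\Delta_\Omega$ restricted to $\im(V)$ really coincides with $V\Delta_{\widetilde\Omega}V^\dagger$, which should follow from the complementary recovery in statement 1. I would first attempt that identification by comparing relative modular operators through Araki's formula and statement 2.
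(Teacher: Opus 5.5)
Your reduction of the non-type-III part to the literature matches the paper, and your forward direction can be repaired. The converse, statement 4 $\Rightarrow$ statement 2 in the type III case, has a genuine gap.

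\textbf{The converse (Step 3).} You reset the reference to $\ket{\widetilde\Omega}=\ket{\widetilde\Phi}$ for each $\widetilde\Phi$, saying statement 4 holds ``for any cyclic separating reference by assumption''. It does not. The theorem fixes one $\ket{\widetilde\Omega}$ in the setup, and statement 4 only concerns $\widehat{\A}_{code;\widetilde\Omega}$ and $\widehat{\A}_{phys;\Omega}$.

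With $\widetilde\Omega$ fixed, the semiclassical entropy formula (below) turns statement 4 into $\Srel(\widetilde\Psi|\widetilde\Omega;\A'_{code})=\Srel(\Psi|\Omega;\A'_{phys})$ and $\Srel(\widetilde\Psi|\widetilde\Omega;\A_{code})=\Srel(\Psi|\Omega;\A_{phys})$. That is JLMS only for pairs whose second entry is the reference.

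The isomorphism $\widehat{\A}_{\widetilde\Omega}\simeq\widehat{\A}_{\widetilde\Phi}$ does not let you change reference. For the same semiclassical state, $S(\Psi_f;\widehat{\A}_{\Omega})-S(\Psi_f;\widehat{\A}_{\Phi})\approx\Srel(\Psi|\Phi;\A')-\Srel(\Psi|\Omega;\A')$, which depends on the state rather than being a constant.

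Fixed-reference data are also genuinely weaker in general. Take $\C^d\otimes\C^d$ with $\A=\B(\C^d)\otimes\id$, $\widetilde\Omega$ maximally entangled, and $V$ the swap. Both fixed-reference relative entropies equal $\log d-S(\widetilde\rho_a)$ and are preserved by $V$. Yet for $\widetilde\Phi=(\id\otimes u)\widetilde\Psi$ with generic unitary $u$, $\Srel(\widetilde\Psi|\widetilde\Phi;\A_{code})=0\neq\Srel(\Psi|\Phi;\A_{phys})$. So you need an extra idea to pass from one reference to all pairs, and the proposal supplies none.

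The paper avoids this step. In its proof of theorem~\ref{thm:alg2}, only (1)$\Leftrightarrow$(2) uses the factorization. (2)$\Leftrightarrow$(3) comes from applying the type I/II theorem~\ref{thm:alg} to the type II pair $(\widehat{\A}_{code;\widetilde\Omega},\widehat{\A}_{phys;\Omega})$ with isometry $\widehat V$ at $\epsilon\to0$. The converse is thus inherited from the type II result, not derived from an entropy formula on semiclassical product states with a single reference.

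\textbf{The entropy formula (Step 2).} Your formula is wrong, though fixably. From the paper's $-\log\rho_{\Psi_f}=h_{\Psi|\Omega}+X-\log\abs{f(X)}^2+O(\epsilon)$ one gets
\[
S(\Psi_f;\widehat{\A}_\Omega)\approx\diracprod{\Psi}{h_{\Psi|\Omega}}{\Psi}+\diracprod{f}{X}{f}-\int_{\Rbb}dx\,\abs{f(x)}^2\log\abs{f(x)}^2 .
\]
Using $h_{\Psi|\Omega}=-h'_{\Omega|\Psi}$ and \eqref{eq:h-diff}, $\diracprod{\Psi}{h_{\Psi|\Omega}}{\Psi}=-\Srel(\Psi|\Omega;\A')=-\Srel(\Psi|\Omega;\A)+\diracprod{\Psi}{h_\Omega}{\Psi}$.

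So you dropped the term $\langle h_\Omega\rangle_\Psi$. The observer contribution is $\langle X\rangle_f$ plus the Shannon entropy of $\abs{f}^2$, not a relative entropy. In general there is also no normalizable tracial state in a type II$_\infty$ algebra to serve as your $\Phi_g$.

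Since $\langle h_\Omega\rangle_\Psi=\Srel(\Psi|\Omega;\A)-\Srel(\Psi|\Omega;\A')$, the forward direction still works, but only by cross-using both halves of statement 2. The $\widehat{\A}$-entropy is controlled by the $\A'$ relative entropy, and vice versa.

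\textbf{Step 1.} It is never used in Steps 2--3, and it cannot be used in the converse because it presupposes statement 1. Its natural role is in the paper's route. There it would check that $\widehat V$ intertwines the crossed-product generators (via $h_\Omega V=V h_{\widetilde\Omega}$), so that the type II theorem can legitimately be applied to $(\widehat{\A}_{code;\widetilde\Omega},\widehat{\A}_{phys;\Omega},\widehat V)$.
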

\end{itemize}

We end the introduction section by giving some comments about the theorem \ref{thm:complete alg}. The theorem is proved without any physical background. Nevertheless to say, when applying to the holographic setting, the four statements correspond to the algebraic version of the entanglement wedge reconstruction, the JLMS formula, radial commutativity, and the RT formula, respectively. In fact, the first three statements are proved to be equivalent \cite{Kang:2018xqy,Kang:2019dfi} in factors of all types and without assuming the semiclassical approximation, where physically speaking both the observers in the bulk and that in the boundary are approximately unentangled with the quantum field degrees of freedom. The inclusion of the first half of the statement 4 is achieved recently in \cite{Xu:2024xbz} for factors of type I/II only (also without assuming the semiclassical approximation). It is only within the semiclassical description that the other half of the statement 4 regarding the factors of type III, which is more faithful to the holographic framework, can be included into the equivalent relations.

In the main content of the work, we intend to make our discussions explicit and organize the work as follows. In section \ref{sec:pre}, we give a brief introduction to the von Neumann algebra, the modular theory (or Tomita-Takesaki theory) and the algebraic entropies, and also the necessary basics of the crossed product algebra. In section \ref{sec:semicl}, we introduce the semiclassical description, within which we derive several useful relations for operators and prove that the relative entropy of the crossed product algebra factorizes. In section \ref{sec:alg}, we discuss how to generalize the algebraic reconstruction theorem into type III cases. We end with section \ref{sec:discussions} by giving further discussions about the algebraic RT formula in type III cases.

\section{Preliminaries}\label{sec:pre}

In this section, we briefly recall some basic aspects of von Neumann algebra which will be used in what follows. All statements are presented without proof, and one can refer to \cite{Vaughan2009,Sorce:2023fdx,reed1972methods,takesaki2006tomita,Liu:2025krl} for more rigorous arguments, and also \cite{Harlow:2016vwg,Witten:2018zxz,Kang:2018xqy,Kang:2019dfi,Xu:2024xbz,Jensen:2023yxy,Faulkner:2024gst,AliAhmad:2023etg} for relevant contents.

\subsection{Hilbert space, von Neumann algebra and modular theory}

\begin{defn}
	A Hilbert space is a complex vector space $\mathcal{H}$ endowed with the inner product $\langle \cdot|\cdot\rangle:\mathcal{H}\times \mathcal{H}\rightarrow \mathbb{C}$ and the norm $||v||=\sqrt{\braket{v}{v}}$, such that the following properties are satisfied:
	\begin{itemize}
			
			\item Linearity for the second variable: $\langle \xi|\psi_1+\alpha \psi_2\rangle=\langle \xi|\psi_1\rangle+\alpha \langle \xi|\psi_2\rangle$,
			
			\item Hermiticity: $\braket{\psi}{\xi}^*=\braket{\xi}{\psi}$,
			
			\item Positive semi-definiteness: $\braket{\psi}{\psi}\geq 0$,
			
			\item Non-degeneracy: $\braket{\psi}{\psi}=0\Leftrightarrow \ket{\psi}=0$,
			
			\item Completeness: $\H$ is complete (i.e. all Cauchy sequences converge) for the norm.

		\end{itemize}
\end{defn}

An important example of a Hilbert space is the square-integrable space on $\Rbb$, which is defined as
\begin{equation}
	L^2(\Rbb)=\left\{f~\bigg|\int_{\mathbb{R}}|f(x)|dx<\infty\right\}.
\end{equation}
In Dirac notation, we write $\ket{f}\in L^2(\mathbb{R})$ as\footnote{Strictly speaking, $\ket{x}$ and $\ket{p}$ are generalized eigenvectors in the sense of the rigged Hilbert space (Gelfand triple) $\mathcal{S}(\Rbb)\subset L^{2}(\Rbb)\subset\mathcal{S}'(\Rbb)$, where $\mathcal{S}(\Rbb)$ is the Schwartz space of rapidly decaying smooth functions. In the paper, we follow the treatments in appendix E of \cite{Jensen:2023yxy} and all final formulas are evaluated on square-integrable functions where the expressions are well-defined.}
\begin{equation}
	\ket{f}=\int_{\mathbb{R}}dx~f(x)\ket{x}
\end{equation}
where we call $\ket{x}$ the ``position'' basis, then $L^2(\Rbb)$ is endowed with an inner product $\braket{f}{g}=\int_{\mathbb{R}}f^*(x)g(x)dx$ and a norm $||f||=\sqrt{\braket{f}{f}}$. For physics interest, hereafter we only consider normalized element in $L^2(\Rbb)$, i.e. $||f||=1$. One can also work in the ``momentum'' space such that
$\ket{f}=\int_{\mathbb{R}}dp~\F_{f}(p)\ket{p}$ with
\begin{equation}
	\F_{f}(p)=\frac{1}{\sqrt{2\pi}}\int_{\mathbb{R}}dx ~f(x) e^{-ipx},\quad f(x)=\frac{1}{\sqrt{2\pi}}\int_{\mathbb{R}}dp ~\F_{f}(p) e^{ipx},
\end{equation}
and the relations between the two bases are given by
\begin{equation}\label{eq:xp}
	\begin{aligned}
		\braket{x}{x'}=\delta(x-x'),\quad \braket{p}{p'}=\delta(p-p'),\quad \braket{x}{p}=\frac{1}{\sqrt{2\pi}}e^{ipx}.
	\end{aligned}
\end{equation}
We also introduce the corresponding ``position'' operator $X$ and the corresponding ``momentum'' operator $P$ as follows,
\begin{equation}\label{eq:XP rel}
	\begin{aligned}
		&P\ket{p}=p\ket{p},\quad X\ket{x}=x\ket{x},\quad [X,P]=i\\
		&e^{isP}\ket{x}=\ket{x-s},\quad e^{isX}\ket{p}=\ket{p+s}.
	\end{aligned}
\end{equation}


\begin{defn}
	A linear operator on a Hilbert space $\H$ is a linear map from (a subspace of) $\H$ into $\H$. The set of all such operators is denoted by $\L(\H)$.
\end{defn}

\begin{defn}
	A bounded operator is a linear operator $\O$ satisfying $||\O\ket{\psi}||\leq K||\ket{\psi}||,~\forall\ket{\psi}\in\H$ for some $K\in\mathbb{R}$. The infimum of all such $K$ is called the norm of $\O$. The algebra of all bounded operators on $\H$ is denoted by $\B(\H)\subset \L(\H)$.
\end{defn}

\begin{defn}
	The commutant of a subset $S\subset \B(\H)$ is a subset $S'\subset \B(\H)$ defined by
	\begin{equation}
		S'\equiv\{\O\in\B(\H)|[\O,\P]=0,~\forall \P\in S\}
	\end{equation}
	i.e. every element in $S'$ commutes with all elements in $S$.
\end{defn}

\begin{defn}
	The hermitian conjugate (or adjoint) of an operator $\O$ is an operator $\O^\dagger$ satisfying $\braket{\psi}{\O\xi}=\braket{\O^\dagger \psi}{\xi}$. A hermitian (or self-adjoint) operator $\O$ satisfies $\O=\O^\dagger$.
\end{defn}

\begin{defn}
	A von Neumann algebra on $\H$ is a subalgebra $\A\subset\B(\H)$ satisfying
	\begin{itemize}
		\item $\id\in \A$,
		
		\item $\A$ is closed under hermitian conjugation,
		
		\item $\A''=\A$.
		
	\end{itemize} 
\end{defn}


\begin{defn}
	A von Neumann algebra $\A$ is a factor if it has a trivial center $\mathcal{Z}$:
	\begin{equation}
		\mathcal{Z}\equiv \A\cap\A'=\{\lambda \id|\lambda\in\C\}
	\end{equation}
	otherwise $\A$ is called a non-factor.
\end{defn}

Every von Neumann algebra that is not a factor admits a canonical decomposition into factors \cite{Sorce:2023fdx,Harlow:2016vwg}. Thus, the general classification problem reduces to the study of factors, which are exhaustively classified into type I, type II, and type III. The applicability of certain fundamental concepts varies with the type, as summarized in what follows.

\[
\begin{array}{|c|c|c|c|c|c|}
	\hline {\text {Type}} & \H=\H_A \otimes \H_B & \Tr & \rho_\psi & S(\psi;\A)  & S_{\text {rel }}(\psi|\xi;\A) \\
	\hline \text { I} & \checkmark & \checkmark & \checkmark & \checkmark  & \checkmark \\
	\hline \text { II} & \times & \checkmark & \checkmark & \checkmark &  \checkmark \\
	\hline \text { III} & \times & \times & \times & \times  & \checkmark \\
	\hline
\end{array}
\]
where we write $\H=\H_A \otimes \H_B$ to denote the decomposition of the Hilbert space corresponding to distinguished spatial subregions. The quantum relative entropy admits a fully algebraic generalization, denoted by $S_{\text {rel }}(\psi|\xi;\A)$, which is well defined for factors of arbitrary type. In contrast, the algebraic von Neumann entropy $S(\psi;\A)$ is rigorously defined only for type I and type II factors, as type III lacks a well-defined trace class structure and the associated notion of density operators.

A finer algebraic classifications gives factors of type I$_{n}$, type I$_{\infty}$, type II$_{1}$, type II$_{\infty}$, type III$_{0}$, type III$_{\lambda}$ ($0<\lambda<1$), and type III$_{1}$ \cite{Sorce:2023fdx,Liu:2025krl}. Factors of type I$_{n}$ and type II$_{1}$ are finite in the sense that all respective projections are finite, while factors of type I$_{\infty}$ and type II$_{\infty}$ are infinite since at least one of their respective projections is infinite. On the other hand, all type III factors are infinite, and their finer classification depends on the range of the intersection over all spectra of all possible modular operators (introduced later) associated to the factor. In particular, the factor is of type III$_1$ if the set ranges from $[0,\infty)$, which is commonly believed to be the type that a subregion in quantum field theory is associated to, see \cite{Yngvason:2004uh,Fredenhagen:1984dc,Araki:1963klf,Araki:1964lyc} and also \cite{Sorce:2023fdx,Liu:2025krl} for more details.

For type I$_n$ and II$_1$ factors, the trace is finite and a normalized normal tracial state exists, while For type I$_\infty$ and II$_\infty$ factors the trace is faithful normal semifinite, i.e. it is finite on a dense set of ``trace class'' operators  but infinite on the identity, so no normalized tracial state exists. We will return to this issue when defining algebraic entropies in section~\ref{sec:alg entropies}.

\begin{defn}
	A subset $\mathcal{H}_{0}\subset\mathcal{H}$ is dense in $\mathcal{H}$ if for every vector $\ket{\psi} \in \mathcal{H}$ and for every $\epsilon>0$, there exists a vector $\ket{\phi} \in \mathcal{H}_{0}$ such that $\|\ket{\psi}-\ket{\phi}\|<\epsilon$.
\end{defn}

\begin{defn}
	$\ket{\psi}\in\H$ is cyclic with respect to a von Neumann algebra $\A$ if $\A\ket{\psi}\equiv\{\O\ket{\psi}|\forall \O\in \A\}$ is dense in $\H$.
\end{defn}

\begin{defn}
	$\ket{\psi}\in\H$ is separating with respect to a von Neumann algebra $\A$ if $\O\ket{\psi}=0 $ implies $\O=0$ for $\O\in\A$.
\end{defn}

\begin{thm}\label{thm:cs}
	$\ket{\psi}\in\H$ is separating with respect to $\A$ if and only if $\ket{\psi}\in\H$ is cyclic with respect to $\A'$, and vice versa.
\end{thm}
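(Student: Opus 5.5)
The plan is to prove one direction and obtain the other by applying it to $\A'$ in place of $\A$, using the bicommutant property $\A''=\A$ from the definition of a von Neumann algebra. So it suffices to show that $\ket{\psi}$ is separating for $\A$ if and only if $\ket{\psi}$ is cyclic for $\A'$. The "vice versa" then follows: $\ket{\psi}$ is separating for $\A'$ iff it is cyclic for $\A''=\A$.

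\textbf{Cyclic for $\A'$ implies separating for $\A$.} Suppose $\O\in\A$ satisfies $\O\ket{\psi}=0$. Then for every $\O'\in\A'$ we have
\begin{equation}
\O\,\O'\ket{\psi}=\O'\,\O\ket{\psi}=0 .
\end{equation}
Thus $\O$ vanishes on the subspace $\A'\ket{\psi}$, which is dense by assumption. Since $\O$ is bounded, hence continuous, it vanishes on all of $\H$, so $\O=0$.

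\textbf{Separating for $\A$ implies cyclic for $\A'$.} Let $P$ be the orthogonal projection onto the closed subspace $\mathcal{K}=\overline{\A'\ket{\psi}}$.
\begin{itemize}
\item Because $\A'$ is an algebra closed under $\dagger$, $\mathcal{K}$ is invariant under every $\O'\in\A'$ and under every $\O'^\dagger$.
\item Invariance of $\mathcal{K}$ under both $\O'$ and $\O'^\dagger$ gives $\O'P=P\O'P$ and $P\O'=P\O'P$, hence $[P,\O']=0$.
\item Therefore $P\in\A''=\A$, and so $\id-P\in\A$ as well.
\item Since $\id\in\A'$, we have $\ket{\psi}\in\mathcal{K}$, which gives $(\id-P)\ket{\psi}=0$.
\end{itemize}
The separating property now forces $\id-P=0$, i.e.\ $\mathcal{K}=\H$. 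Hence $\ket{\psi}$ is cyclic for $\A'$.

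The only step that needs care is this second direction, specifically showing that the projection onto the cyclic subspace lies in the algebra. That is exactly where self-adjointness of $\A'$ and the bicommutant property $\A''=\A$ are used.
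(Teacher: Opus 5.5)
Your proof is correct and complete. The first direction rests on commutation plus continuity, and the second correctly places the projection onto $\overline{\A'\ket{\psi}}$ in $\A''=\A$. The paper gives no proof of its own: its preliminaries are explicitly ``presented without proof'' and defer to standard references. What you wrote is the standard textbook argument found there.

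The one step you leave implicit is automatic. For the ``vice versa'' part you apply the result to $\A'$, which requires $\A'$ to be a von Neumann algebra. It is one: $\id\in\A'$, $\A'$ is $\dagger$-closed because $\A$ is, and $\A'''=\A'$.
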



A direct consequence is that when we assume $\ket{\psi}\in\H$ is both cyclic and separating with respect to $\A$, then $\ket{\psi}\in\H$ is also both cyclic and separating with respect to $\A'$.

\begin{defn}
	A relative Tomita operator on $\A$ is an anti-linear operator satisfying
	\begin{equation}\label{def:relative Tomita}
		S_{\xi|\psi}\left(\O\ket{\psi}\right)=\O^\dagger \ket{\xi},\quad \forall \O\in\A
	\end{equation}
\end{defn}

Notice that $S_{\xi|\psi}$ is densely defined (i.e. whose domain is a dense subset of $\H$) if and only if $\ket{\psi}$ is cyclic and separating with respect to $\A$. The cyclic condition ensures that the domain $\O\ket{\psi}$ is dense while the separating condition is to avoid the possibility that $\O\ket{\psi}=0,~ \O^\dagger\ket{\xi}\neq 0$. Hereafter we mostly assume the cyclic separating condition of $\ket{\psi}$ for $S_{\xi|\psi}$.

\begin{thm}
	Provided both $\ket{\psi},\ket{\xi}$ are cyclic and separating with respect to $\A$, we have
	\begin{equation}
		S_{\xi|\psi}^{-1}=S_{\psi|\xi},
	\end{equation}
\end{thm}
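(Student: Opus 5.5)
The plan is to work directly from the defining relation \eqref{def:relative Tomita}. First I fix the domains. Since $\ket{\psi}$ is cyclic and separating for $\A$, the operator $S_{\xi|\psi}$ is well defined and densely defined on $\A\ket{\psi}$. Its range is $\{\O^\dagger\ket{\xi}\,|\,\O\in\A\}=\A\ket{\xi}$, because $\A$ is closed under hermitian conjugation. The same reasoning with the roles exchanged shows that $S_{\psi|\xi}$ is well defined on $\A\ket{\xi}$, using that $\ket{\xi}$ is separating, and has range $\A\ket{\psi}$.

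Next I compose the two maps on their natural domains. For any $\O\in\A$,
\begin{equation}
S_{\psi|\xi}S_{\xi|\psi}\left(\O\ket{\psi}\right)=S_{\psi|\xi}\left(\O^\dagger\ket{\xi}\right)=(\O^\dagger)^\dagger\ket{\psi}=\O\ket{\psi}.
\end{equation}
Symmetrically, $S_{\xi|\psi}S_{\psi|\xi}(\O\ket{\xi})=\O\ket{\xi}$. So the two operators are mutually inverse bijections between $\A\ket{\psi}$ and $\A\ket{\xi}$. Antilinearity does not affect this, because the composition is linear and is the identity on each domain. Injectivity of $S_{\xi|\psi}$ is exactly where the separating property of $\ket{\xi}$ enters: if $\O^\dagger\ket{\xi}=0$, then $\O=0$, so $\O\ket{\psi}=0$.

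The one genuinely delicate step is what ``$=$'' means for unbounded operators. In modular theory the symbols $S_{\xi|\psi}$ normally denote the closures of these densely defined maps. For those closures I must show that $\overline{S_{\xi|\psi}}$ is injective and that its inverse equals $\overline{S_{\psi|\xi}}$.

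My approach is through graphs. The graph of $S_{\xi|\psi}$ is $\{(\O\ket{\psi},\O^\dagger\ket{\xi})\}$, and the graph of $S_{\psi|\xi}$ is $\{(\O^\dagger\ket{\xi},\O\ket{\psi})\}$, which is exactly the flip of the first. Since the flip map $(a,b)\mapsto(b,a)$ is a unitary on $\H\oplus\H$, the closure of the flipped graph is the flip of the closed graph. The remaining point is that this flipped closure is again the graph of an operator, i.e. that $\overline{S_{\psi|\xi}}$ exists. That holds because $S_{\psi|\xi}$ is closable. I would establish closability in the standard way, by checking that the map on $\A'\ket{\xi}$ given by $\O'\ket{\xi}\mapsto\O'^\dagger\ket{\psi}$ is contained in the adjoint of $S_{\psi|\xi}$. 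That adjoint is densely defined because $\ket{\xi}$ is cyclic for $\A'$ by Theorem~\ref{thm:cs}.

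With closability in hand, the closure of $S_{\xi|\psi}$ is invertible with inverse equal to the closure of $S_{\psi|\xi}$, which proves the statement.
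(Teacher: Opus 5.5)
The paper gives no proof of this statement to compare against: Section~\ref{sec:pre} says explicitly that all preliminary results are presented without proof, deferring to the standard references. Your argument is correct and is the standard one.

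The computation on the natural domains $\A\ket{\psi}$ and $\A\ket{\xi}$ already establishes the identity at the level at which the paper uses these operators. The paper only ever treats $S_{\xi|\psi}$ as the map $\O\ket{\psi}\mapsto\O^\dagger\ket{\xi}$. Your graph-flip argument then correctly upgrades the result to the closures. This works because the flip $(a,b)\mapsto(b,a)$ is a homeomorphism of $\H\oplus\H$, and closability of $S_{\psi|\xi}$ is exactly injectivity of $\overline{S_{\xi|\psi}}$.

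Your closability step checks out. The chain $\braket{\O'\xi}{\O^\dagger\psi}=\braket{\O\O'\xi}{\psi}=\braket{\O'\O\xi}{\psi}=\braket{\O\xi}{\O'^\dagger\psi}$ is the antilinear adjoint relation. It is the easy inclusion in the paper's next preliminary theorem $S^\dagger_{\xi|\psi}=S'_{\xi|\psi}$, with the labels exchanged.

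One thing you should state explicitly. You take closability of $S_{\xi|\psi}$ itself for granted when you write $\overline{S_{\xi|\psi}}$. It follows from the same computation with $\psi$ and $\xi$ exchanged, using that $\ket{\psi}$ is cyclic for $\A'$ (equivalently separating for $\A$, by Theorem~\ref{thm:cs}). That is where the symmetric hypothesis on both vectors is used in full.
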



\begin{thm}
	Provided $\ket{\psi}$ is cyclic and separating with respect to $\A$, we have
	\begin{equation}
		S^\dagger_{\xi|\psi}=S'_{\xi|\psi}
	\end{equation}
	where $S'_{\xi|\psi}$ is a relative Tomita operator on $\A'$.
\end{thm}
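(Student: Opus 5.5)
The statement to prove is the last theorem stated, namely $S^\dagger_{\xi|\psi}=S'_{\xi|\psi}$ for $\ket{\psi}$ cyclic and separating with respect to $\A$. Here $S'_{\xi|\psi}$ is the closure of the antilinear map $\O'\ket{\psi}\mapsto \O'^\dagger\ket{\xi}$ for $\O'\in\A'$. It is densely defined because, by Theorem \ref{thm:cs}, $\ket{\psi}$ is also cyclic and separating for $\A'$.

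I will prove the two inclusions $S'_{\xi|\psi}\subset S^\dagger_{\xi|\psi}$ and $S^\dagger_{\xi|\psi}\subset S'_{\xi|\psi}$. For an antilinear operator, the adjoint is defined by $\braket{\chi}{S\eta}=\overline{\braket{S^\dagger\chi}{\eta}}=\braket{\eta}{S^\dagger \chi}$ for all $\eta$ in the domain of $S$.

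\textbf{First inclusion.} Take $\O\in\A$ and $\O'\in\A'$ and use $[\O,\O']=0$:
\begin{equation}
\braket{\O'\psi}{S_{\xi|\psi}\O\psi}=\braket{\O'\psi}{\O^\dagger\xi}=\braket{\O\O'\psi}{\xi}=\braket{\O'\O\psi}{\xi}=\braket{\O\psi}{\O'^\dagger\xi}.
\end{equation}
The right-hand side equals $\braket{\O\psi}{S'_{\xi|\psi}\O'\psi}$. Hence $\O'\ket{\psi}$ lies in the domain of $S^\dagger_{\xi|\psi}$, with $S^\dagger_{\xi|\psi}\O'\ket{\psi}=S'_{\xi|\psi}\O'\ket{\psi}$. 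Since adjoints are closed, this gives $\overline{S'_{\xi|\psi}}\subset S^\dagger_{\xi|\psi}$.

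\textbf{Reverse inclusion.} This is the main obstacle. It is the standard Tomita--Takesaki argument, adapted to the relative case. Take $\ket{\chi}\in D(S^\dagger_{\xi|\psi})$ and set $\ket{\chi^\sharp}=S^\dagger_{\xi|\psi}\ket{\chi}$. I want to show that $\chi$ is approximated, in graph norm, by vectors $\O'_n\ket{\psi}$.

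First, define an operator $T_\chi$ on the dense set $\A\ket{\psi}$ by $T_\chi\O\ket{\psi}=\O\ket{\chi}$. The adjointness relation shows $T_\chi$ is closable, with $T_\chi^\dagger\supset T_{\chi^\sharp}$ on $\A\ket{\xi}$. This requires care if $\ket{\xi}$ is not cyclic; in that case I would restrict to the projection onto $\overline{\A'\xi}$, or assume $\xi$ is cyclic and separating as in the later applications.

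Next, note that $T_\chi$ commutes with $\A$. Its closure is therefore affiliated with $\A'$. Take its polar decomposition $\overline{T_\chi}=U|T_\chi|$, where $U\in\A'$ and the spectral projections of $|T_\chi|$ lie in $\A'$. Then cut off with bounded functions $f_n(|T_\chi|)$, for instance $f_n(t)=t$ for $t\le n$ and $0$ otherwise. This gives bounded operators $\O'_n=Uf_n(|T_\chi|)\in\A'$ with $\O'_n\ket{\psi}\to\ket{\chi}$.

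Finally, use the adjoint relation to show $\O'^\dagger_n\ket{\xi}\to\ket{\chi^\sharp}$. This puts $\chi$ in the domain of the closure $\overline{S'_{\xi|\psi}}$, with the same value. Justifying this graph-norm convergence, in particular handling the unbounded affiliated operators and their domains, is the step I expect to be delicate. For it I would follow the proof in Takesaki's book, which the paper cites.
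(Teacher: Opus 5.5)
The paper states this without proof and defers to the standard references. Your argument is the standard Tomita--Takesaki one, and your first inclusion is correct. The second inclusion, however, has a genuine gap. You assert that the adjoint relation makes $T_\chi$ closable, but it only gives $D(T_\chi^\dagger)\supseteq\A\ket{\xi}$. That set is dense only in $p'\H$, where $p'\in\A'$ is the projection onto $\overline{\A\ket{\xi}}$. The statement puts no hypothesis on $\ket{\xi}$, and the failure is real, not a technicality. Take $\H=K\otimes K$, $\A=\B(K)\otimes\id$, $\ket{\psi}=\sum_i\lambda_i^{1/2}\ket{e_i}\otimes\ket{e_i}$ with all $\lambda_i>0$, $\ket{\xi}=\ket{v}\otimes\ket{w}$, and $\ket{\chi}=\ket{a}\otimes\ket{b}$ with $b\perp w$ and $\sum_i|\braket{e_i}{a}|^2/\lambda_i=\infty$.
\begin{itemize}
\item Then $\chi\in D(S^\dagger_{\xi|\psi})$, with $S^\dagger_{\xi|\psi}\ket{\chi}=0$.
\item Choose $\O_n=\ketbra{c}{g_n}\otimes\id$ with $\braket{g_n}{a}=1$ and $\sum_i\lambda_i|\braket{g_n}{e_i}|^2\to0$. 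This gives $\O_n\ket{\psi}\to0$ while $T_\chi\O_n\ket{\psi}=\ket{c}\otimes\ket{b}\neq0$, so $T_\chi$ is not closable.
\end{itemize}
Your proposed patch uses the wrong projection. $\overline{\A'\ket{\xi}}$ is the relevant one for the mirror-image argument on the $\A'$ side, not for $T_\chi$. Assuming $\ket{\xi}$ cyclic and separating instead proves a weaker statement than the one given.

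The fix is a reduction you did not make. Both sides of the claimed equality vanish on $(\id-p')\H$.
\begin{itemize}
\item The range of $S_{\xi|\psi}$ lies in $\overline{\A\ket{\xi}}=p'\H$. Hence every vector of $(\id-p')\H$ lies in $D(S^\dagger_{\xi|\psi})$ with image $0$, and $S^\dagger_{\xi|\psi}\ket{\chi}=S^\dagger_{\xi|\psi}\,p'\ket{\chi}$.
\item On the other side, $S'_{\xi|\psi}(\id-p')\O'\ket{\psi}=\O'^\dagger(\id-p')\ket{\xi}=0$, and $(\id-p')\A'\ket{\psi}$ is dense in $(\id-p')\H$. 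So $\overline{S'_{\xi|\psi}}$ is also $0$ there.
\end{itemize}
You may therefore assume $\chi\in p'\H$. Then $T_\chi\O\ket{\psi}=p'\O\ket{\chi}$ has range in $p'\H$, so $(\id-p')\H\subseteq D(T_\chi^\dagger)$. Hence $D(T_\chi^\dagger)\supseteq\A\ket{\xi}+(\id-p')\H$ is dense, and the rest of your argument goes through. Alternatively, your projection onto $\overline{\A'\ket{\xi}}$ works if you prove $S'^\dagger_{\xi|\psi}\subseteq\overline{S_{\xi|\psi}}$ by the mirrored argument and then take adjoints.

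The final step you defer to Takesaki is in fact immediate. Taking $\O=\id$ in the adjoint relation gives $\ket{\xi}\in D(T_\chi^\dagger)$ with $T_\chi^\dagger\ket{\xi}=\ket{\chi^\sharp}$. Since $(\overline{T_\chi}E_n)^\dagger\supseteq E_nT_\chi^\dagger$, you get $\O_n'^\dagger\ket{\xi}=E_nT_\chi^\dagger\ket{\xi}\to\ket{\chi^\sharp}$.
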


%

\begin{defn}
	Provided $\ket{\psi}$ is cyclic and separating with respect to $\A$, the relative modular operator on $\A$ is defined by\footnote{There is an equivalent definition of the relative modular operator via the unique polar decomposition $S_{\xi|\psi}=J_{\xi|\psi}\Delta_{\xi|\psi}^{\frac{1}{2}}$ with $J_{\xi|\psi}$ anti-unitary.}
	\begin{equation}\label{def:rel mod op}
		\Delta_{\xi|\psi}\equiv S^\dagger_{\xi|\psi}S_{\xi|\psi}
	\end{equation}
	and the relative modular Hamiltonian on $\A$ is defined by 
	\begin{equation}\label{def:rel mod H}
		h_{\xi|\psi}\equiv-\log \Delta_{\xi|\psi}
	\end{equation}
\end{defn}

\begin{thm}
	The relative modular operator $\Delta_{\xi|\psi}$ on $\A$ is Hermitian.
\end{thm}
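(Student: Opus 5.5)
The plan is to treat $S_{\xi|\psi}$ as a densely defined, closable anti-linear operator. Then $\Delta_{\xi|\psi}=S^\dagger_{\xi|\psi}\bar S_{\xi|\psi}$ (with $\bar S_{\xi|\psi}$ the closure) is an instance of the general fact that $T^\dagger T$ is self-adjoint and positive for any closed, densely defined $T$. Here ``Hermitian'' is read in the unbounded sense, $\Delta=\Delta^\dagger$ including equality of domains. For bounded vectors the weaker statement $\langle \phi|\Delta\chi\rangle=\langle\Delta\phi|\chi\rangle$ already follows from the definition. The only real work is the domain issue, because $S_{\xi|\psi}$ is unbounded.

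\emph{Step 1 (dense domain).} Since $\ket{\psi}$ is cyclic w.r.t.\ $\A$, the domain $\A\ket{\psi}$ of $S_{\xi|\psi}$ is dense. Since $\ket{\psi}$ is separating, the assignment in \eqref{def:relative Tomita} is well defined.

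\emph{Step 2 (closability).} By the theorem stating that $S^\dagger_{\xi|\psi}=S'_{\xi|\psi}$, the adjoint is the relative Tomita operator on $\A'$. Its domain contains $\A'\ket{\psi}$, which is dense by Theorem~\ref{thm:cs}, because $\ket{\psi}$ separating for $\A$ means it is cyclic for $\A'$. An (anti-)linear operator with densely defined adjoint is closable, so we replace $S_{\xi|\psi}$ by its closure $\bar S_{\xi|\psi}$. Its adjoint is the same operator $S'_{\xi|\psi}$ (closed), so Definition \eqref{def:rel mod op} is understood as $\Delta_{\xi|\psi}=S'_{\xi|\psi}\bar S_{\xi|\psi}$.

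\emph{Step 3 (von Neumann's theorem for $T^\dagger T$).} Let $T=\bar S_{\xi|\psi}$. Consider the graph $\Gamma(T)\subset\H\oplus\H$, which is closed. Its orthogonal complement is $\{(-T^\dagger\eta,\eta)\}$; for anti-linear $T$ one uses the real inner product $\mathrm{Re}\langle\cdot|\cdot\rangle$, or equivalently composes with a fixed conjugation to reduce to the linear case. Decomposing $(\chi,0)$ along $\Gamma(T)\oplus\Gamma(T)^\perp$ shows that $\id+T^\dagger T$ is surjective, and one checks it is injective. Its inverse $B$ is then bounded, everywhere defined and symmetric, hence self-adjoint. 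Therefore $T^\dagger T=B^{-1}-\id$ is self-adjoint. In addition, $\langle\phi|\Delta\phi\rangle=\|T\phi\|^2\ge 0$, so $\Delta_{\xi|\psi}$ is positive. This also justifies $h_{\xi|\psi}=-\log\Delta_{\xi|\psi}$ in \eqref{def:rel mod H} via the spectral theorem, provided $\ker\Delta=0$. That follows from $\ker\Delta=\ker T$ and the invertibility $S^{-1}_{\xi|\psi}=S_{\psi|\xi}$ when $\ket{\xi}$ is also cyclic and separating.

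The main obstacle is Step 3 for anti-linear $T$. The cleanest route is to note that $T^\dagger T$ is linear: the two anti-linearities cancel. One then runs the standard graph argument with respect to the real Hilbert structure, so that orthogonal projections onto $\Gamma(T)$ still make sense.
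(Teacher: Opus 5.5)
Your proof is correct. The paper gives no argument of its own here: Section~2 explicitly presents all preliminaries ``without proof'' and defers to the Tomita--Takesaki literature, so there is no competing route in the text. What you wrote is the standard textbook proof: closability of $S_{\xi|\psi}$, followed by von Neumann's theorem that $T^\dagger T$ is self-adjoint for closed, densely defined $T$.

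Your argument supplies what the bare citation leaves implicit. It shows that $\Delta_{\xi|\psi}=S^\dagger_{\xi|\psi}S_{\xi|\psi}$ must be read with the closure $\bar S_{\xi|\psi}$. It establishes ``Hermitian'' in the strong sense of self-adjointness with equal domains, which is more than the formal symmetry in the paper's definition of the adjoint. And it gives $\Delta_{\xi|\psi}\geq 0$, which is what makes $h_{\xi|\psi}=-\log\Delta_{\xi|\psi}$ meaningful.

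Two steps can be streamlined.
\begin{itemize}
\item For closability you only need the inclusion $S'_{\xi|\psi}\subseteq S^\dagger_{\xi|\psi}$. This is the one-line identity $\langle a'\psi|a^\dagger\xi\rangle=\langle aa'\psi|\xi\rangle=\langle a'a\psi|\xi\rangle=\langle a\psi|(a')^\dagger\xi\rangle$ for $a\in\A$, $a'\in\A'$. The full equality $S^\dagger_{\xi|\psi}=S'_{\xi|\psi}$ is a harder theorem that your argument never actually uses.
\item The conjugation trick you mention in passing is the cleaner way through Step~3. Take $J_0$ any antiunitary involution, e.g.\ complex conjugation in an orthonormal basis. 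Then $L\equiv J_0\bar S_{\xi|\psi}$ is linear and closed, $\bar S^\dagger_{\xi|\psi}=L^*J_0$, and hence $\Delta_{\xi|\psi}=L^*L$ with the same domain. Von Neumann's theorem then applies verbatim, with no real-Hilbert-space bookkeeping.
\end{itemize}

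Your remark that $\ker\Delta_{\xi|\psi}=0$ is right, but as you note it genuinely requires $\ket{\xi}$ to be cyclic and separating as well; for $\ket{\xi}=0$ one gets $\Delta_{\xi|\psi}=0$. It is not part of the statement being proved.
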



\begin{thm}
	Provided both $\ket{\psi},\ket{\xi}$ are cyclic and separating with respect to $\A$, we have
	\begin{equation}
		\Delta_{\psi|\xi}^{-1}=\Delta'_{\xi|\psi}\quad\Rightarrow\quad h_{\psi|\xi}=-h_{\xi|\psi}'
	\end{equation}
	where $\Delta'_{\xi|\psi}$ and $h_{\xi|\psi}'$ are the relative modular operator and the relative modular Hamiltonian on $\A'$ respectively.
\end{thm}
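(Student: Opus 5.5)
We prove $\Delta_{\psi|\xi}^{-1}=\Delta'_{\xi|\psi}$ directly from the definitions, using the two earlier results on relative Tomita operators: $S_{\xi|\psi}^{-1}=S_{\psi|\xi}$ (valid when both vectors are cyclic and separating for $\A$) and $S^\dagger_{\xi|\psi}=S'_{\xi|\psi}$. Since $\ket{\psi},\ket{\xi}$ are cyclic and separating for $\A$, Theorem \ref{thm:cs} shows they are also cyclic and separating for $\A'$. Hence all four operators $S_{\psi|\xi}$, $S_{\xi|\psi}$, $S'_{\psi|\xi}$, $S'_{\xi|\psi}$ are densely defined, closable antilinear operators, and both relative modular operators are well defined.

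The core is a short algebraic chain, using that inversion and adjoint commute for closed, densely defined, injective operators with dense range:
\begin{equation}
	\Delta_{\psi|\xi}^{-1}=\left(S^\dagger_{\psi|\xi}S_{\psi|\xi}\right)^{-1}=S_{\psi|\xi}^{-1}\left(S_{\psi|\xi}^{\dagger}\right)^{-1}=S_{\xi|\psi}\left(S_{\psi|\xi}^{-1}\right)^{\dagger}=S_{\xi|\psi}S_{\xi|\psi}^{\dagger}.
\end{equation}
Next we apply $S^\dagger_{\xi|\psi}=S'_{\xi|\psi}$, together with its adjoint $S_{\xi|\psi}=(S'_{\xi|\psi})^\dagger$ (taking $S$ to mean its closure, so that $S^{\dagger\dagger}=S$). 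This gives
\begin{equation}
	\Delta_{\psi|\xi}^{-1}=\left(S'_{\xi|\psi}\right)^\dagger S'_{\xi|\psi}=\Delta'_{\xi|\psi},
\end{equation}
which is the definition \eqref{def:rel mod op} applied to $\A'$. Taking $-\log$ of both sides, through the spectral theorem for the positive self-adjoint operators involved, then yields $h_{\psi|\xi}=\log\Delta'_{\xi|\psi}=-h'_{\xi|\psi}$ by \eqref{def:rel mod H}.

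The main obstacle is the unbounded-operator bookkeeping rather than the algebra. We must justify three things. First, the identity $(S^\dagger S)^{-1}=S^{-1}(S^\dagger)^{-1}$ holds as an equality of closed operators with the correct domains. Second, $(S^{-1})^\dagger=(S^\dagger)^{-1}$ holds; this needs $S_{\psi|\xi}$ to be closed and invertible with dense range, which the cyclic and separating assumptions give. Third, positivity and trivial kernel of $\Delta$ make the logarithm well defined.

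To handle these cleanly, we would work with the polar decomposition $S_{\xi|\psi}=J\Delta_{\xi|\psi}^{1/2}$ mentioned in the footnote. Then $S^\dagger_{\xi|\psi}=\Delta^{1/2}_{\xi|\psi}J^\dagger$, so
\begin{equation}
	S_{\xi|\psi}S_{\xi|\psi}^\dagger=J\Delta_{\xi|\psi}J^\dagger,
\end{equation}
which is manifestly positive and self-adjoint, with all domain issues carried by the single self-adjoint operator $\Delta_{\xi|\psi}$. Because $J$ is antiunitary, this conjugation preserves positivity and commutes with the spectral calculus, which settles the step to the logarithm.
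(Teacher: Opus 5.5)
Your proof is correct. The paper does not prove this statement itself: Section~2 explicitly presents its results without proof and defers to the cited literature. Your derivation, which inverts $\Delta_{\psi|\xi}=S^\dagger_{\psi|\xi}S_{\psi|\xi}$ via $S^{-1}_{\psi|\xi}=S_{\xi|\psi}$ and $(S^\dagger)^{-1}=(S^{-1})^\dagger$ and then substitutes $S^\dagger_{\xi|\psi}=S'_{\xi|\psi}$, is the standard argument and uses only the two theorems stated immediately before it.

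The polar-decomposition paragraph is not needed. For the closed operator $\overline{S}_{\xi|\psi}$, the product $\overline{S}_{\xi|\psi}S^\dagger_{\xi|\psi}$ is automatically positive and self-adjoint by von Neumann's theorem. The trivial kernel of $\Delta_{\psi|\xi}$, inherited from the injectivity of $\overline{S}_{\psi|\xi}$, is all the spectral calculus requires to give $\log(\Delta^{-1}_{\psi|\xi})=-\log\Delta_{\psi|\xi}$.
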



\begin{defn}
	Provided $\ket{\psi}$ is cyclic and separating with respect to $\A$, we define the Tomita operator $S_{\psi}\equiv S_{\psi|\psi}:\O\ket{\psi}\mapsto\O^\dagger \ket{\psi}$; the modular operator $\Delta_{\psi}\equiv\Delta_{\psi|\psi}= S^\dagger_{\psi}S_{\psi}$; the modular Hamiltonian $h_{\psi}\equiv h_{\psi|\psi}=-\log \Delta_{\psi}$.
\end{defn}



For the physical applications considered in previous works, e.g. \cite{Chandrasekaran:2022cip,Chandrasekaran:2022eqq,Jensen:2023yxy,Faulkner:2024gst}, the reference state $\ket{\Omega}$, commonly used to construct the crossed product algebra, is a KMS state at inverse temperature $\beta>0$ with the modular operator satisfies
\begin{equation}\label{eq:KMS}
	\Delta_{\Omega}=e^{-\beta h},
\end{equation}
where $h$ is a self-adjoint operator with the physical interpretation of the renormalized energy,. Now we have $h_{\Omega}=\beta h$ and the parameter $\beta$ is set by the specific physical context (e.g. $\beta=\beta_{\text{dS}}$ for the Bunch-Davies state $\ket{\Psi_{\text{dS}}}$ in de~Sitter~\cite{Chandrasekaran:2022cip}). For notational simplicity, we henceforth set $\beta=1$ throughout this paper, which is equivalent to absorbing the inverse temperature into the definition of the modular Hamiltonian so that no separate $\beta$ factor appears. One can restore the original expressions by replacing $h_{\Omega}\to\beta h$.

%


\begin{defn}\label{def:cocycle}
	Let $\A$ be a von Neumann algebra acting on a Hilbert space $\H$. The Connes cocycle associated with two states $\ket{\psi}$ and $\ket{\omega}$ is defined by
	\begin{equation}\label{eq:cocycle-def}
		u_{\psi|\omega}(t)
		\equiv \Delta_{\psi|\omega}^{it}\,\Delta_{\omega}^{-it}
		=\Delta_{\psi}^{it}\,\Delta_{\omega|\psi}^{-it},\quad t\in\Rbb.
	\end{equation}
\end{defn}

The operator $u_{\psi|\omega}(t)$ has the following properties:

\begin{itemize}
	\item $u_{\psi|\omega}(t)$ is a unitary operator on $\H$ for all $t\in\Rbb$.
	
	\item $u_{\psi|\psi}(t)=1$ for all $t\in\Rbb$. 
	
	\item $u_{\psi|\omega}(t)\in\A$ for all $t\in\Rbb$.

	\item For any three cyclic separating states $\ket{\psi},\ket{\phi},\ket{\omega}$, the Connes cocycle satisfies the chain rule:
	\begin{equation} \label{eq:Connes cocycle chain rule}
		u_{\psi|\phi}(t)\,u_{\phi|\omega}(t)=u_{\psi|\omega}(t),\quad \forall t\in\Rbb.
	\end{equation}
	
\end{itemize}

The infinitesimal generator of the Connes cocycle yields the key identity used in the main text. Differentiating \eqref{eq:cocycle-def} at $t=0$,
\begin{equation}\label{eq:cocycle-deriv-app1}
	\begin{aligned}
		\left.\frac{d}{dt}\right|_{t=0}u_{\psi|\omega}(t)
		&= \left.\frac{d}{dt}\right|_{t=0}
		\bigl(e^{-ith_{\psi|\omega}}\,e^{ith_{\omega}}\bigr) \\
		&= -i\,(h_{\psi|\omega}-h_{\omega}),
	\end{aligned}
\end{equation}
and similarly from the second expression in \eqref{eq:cocycle-def},
\begin{equation}\label{eq:cocycle-deriv-app2}
	\left.\frac{d}{dt}\right|_{t=0}u_{\psi|\omega}(t)
	= -i\,(h_{\psi}-h_{\omega|\psi}).
\end{equation}
We therefore obtain
\begin{equation}\label{eq:h-diff}
	h_{\psi|\omega}-h_{\omega}=h_{\psi}-h_{\omega|\psi},
\end{equation}
and both sides are affiliated to $\A$.\footnote{An unbounded operator is affiliated to the algebra if every bounded function of that operator is in the algebra. } Another useful identity is to differentiate the both sides of the chain rule \eqref{eq:Connes cocycle chain rule}:
\begin{equation}
	\begin{aligned}
		\left.\frac{d}{dt}\right|_{t=0}u_{\psi|\phi}(t)
		&= -i\,(h_{\psi}-h_{\phi|\psi}),\\
		\left.\frac{d}{dt}\right|_{t=0}u_{\phi|\omega}(t)
		&= -i\,(h_{\phi|\omega}-h_{\omega})\\
		\left.\frac{d}{dt}\right|_{t=0}u_{\psi|\omega}(t)
		&= -i\,(h_{\psi|\omega}-h_{\omega}),
	\end{aligned}
\end{equation}
where we use \eqref{eq:cocycle-deriv-app2} for $u_{\psi|\phi}(t)$ and \eqref{eq:cocycle-deriv-app1} for $u_{\phi|\omega}(t)$ and $u_{\psi|\omega}(t)$. We therefore have:
\begin{equation}
	\begin{aligned}
		\text{LHS}&=\left[\left.\frac{d}{dt}\right|_{t=0}u_{\psi|\phi}(t)\right]~u_{\phi|\omega}(0)+u_{\psi|\phi}(0)~\left[\left.\frac{d}{dt}\right|_{t=0}u_{\phi|\omega}(t)\right]\\
		&=-i\left(h_{\psi}-h_{\phi|\psi}+h_{\phi|\omega}-h_{\omega}\right),\\
		\text{RHS}&= -i\,(h_{\psi|\omega}-h_{\omega})
	\end{aligned}
\end{equation}
which implies
\begin{equation}\label{eq:chain rule id}
	h_{\phi|\psi}=h_{\phi|\omega}-h_{\psi|\omega}+h_{\psi}.
\end{equation}

\subsection{Algebraic entropies}\label{sec:alg entropies}

In this subsection, we brief introduce the algebraic version of relative entropy and von Neumann entropy. For more related discussions, readers can consult \cite{Chandrasekaran:2022cip,Chandrasekaran:2022eqq,Jensen:2023yxy,Faulkner:2024gst,Xu:2024xbz,Zhong:2026syt}.

The algebraic generalization of the relative entropy is defined as the expectation value of the relative modular Hamiltonian due to Araki \cite{araki1975relative,araki1975inequalities}:
\begin{equation}\label{def:Arakis relative entropy}
	\Srel(\psi|\phi;\A)\equiv\diracprod{\psi}{h_{\phi|\psi}}{\psi}
\end{equation}
with $\ket{\psi}$ being cyclic and separating with respect to the corresponding algebra $\A$. Note that some previous literatures, e.g.\cite{Witten:2018zxz}, use the convention that interchanges the positions of $\psi$ and $\xi$ in the subscript of the relative Tomita operator \eqref{def:relative Tomita}, which results in reversed arguments for the relative entropy. Our notation in \eqref{def:Arakis relative entropy} strictly follows \cite{Witten:2021unn,Chandrasekaran:2022cip,Chandrasekaran:2022eqq,Jensen:2023yxy}. 

In type I/II factors, \eqref{def:Arakis relative entropy} reduces to the familiar expression
	\begin{equation} \label{def:relative entropy}
		\Srel(\psi|\phi;\A)=\Tr\bigl(\rho_{\psi}\log\rho_{\psi}-\rho_{\psi}\log\rho_{\phi}\bigr)=\diracprod{\psi}{\log\rho_{\psi}}{\psi}-\diracprod{\psi}{\log\rho_{\phi}}{\psi},
	\end{equation}
	which is non-negative and vanishes if and only if $\rho_{\psi}=\rho_{\phi}$. The algebraic definition~\eqref{def:Arakis relative entropy} inherits both properties and is well-defined for factors of arbitrary type. See \cite{Witten:2018zxz} for details about how the algebraic relative entropy coincides with the quantum relative entropy in finite-dimensional quantum mechanics, which corresponds to the type I$_n$ cases with $n$ being the dimension of a given system.

We now turn to the algebraic von Neumann entropy, whose definition depends on the type of the factor:
\begin{enumerate}
	\item For a factor of type I$_n$ or II$_1$, the trace is finite on the whole algebra, and $\Tr_{\A}(\id)$ can be normalized to $1$. In this case the trace arises from a normalized normal tracial state $\ket{\tau}$ satisfying
	\begin{equation}\label{eq:tracial-state}
		\diracprod{\tau}{\O\P}{\tau}=\diracprod{\tau}{\P\O}{\tau},\quad \forall\O,\P\in\A\quad\Rightarrow\quad
		\Tr_{\A}(\O)\equiv\diracprod{\tau}{\O}{\tau}.
	\end{equation}
	such that the algebraic version of the von Neumann entropy is defined by
	\begin{equation}\label{def:algvnentropy-finite}
	S(\psi;\A)\equiv-\Srel(\psi|\tau;\A)=-\diracprod{\psi}{h_{\tau|\psi}}{\psi}.
	\end{equation}
	which is the formula used in~\cite{segal1960note,ohya2004quantum,Longo:2022lod}.
	
	\item For a factor of type I$_\infty$ or II$_\infty$, the trace is only semifinite such that $\Tr_{\A}(\id)=\infty$, so no normalized tracial vector exists. Nevertheless, the semifinite trace is sufficient to define density operators as positive elements $\rho_{\psi}$ affiliated to $\A$ satisfying $\Tr_{\A}(\rho)=1$, and the von Neumann entropy is then defined via \cite{Jensen:2023yxy,Faulkner:2024gst}
	\begin{equation}
		S(\rho_{\psi};\A)\equiv-\Tr_{\A}(\rho_{\psi}\log\rho_{\psi})=- \diracprod{\psi}{\log\rho_{\psi}}{\psi}.
	\end{equation}
\end{enumerate}

In both cases, the algebraic von Neumann entropy is uniquely defined up to an overall additive constant. Such constant does not cause ambiguity, since only the difference between entropies has physical meaning. The non-existence of algebraic von Neumann entropy in factors of type III is due to the non-existence of a faithful normal semifinite trace.

\subsection{Crossed product algebra}\label{sec:crossed prod}

%
%

Consider a von Neumann algebra $\A$ with a corresponding Hilbert space $\H$ and a cyclic and separating state $\ket{\Omega}$ with a modular flow $\sigma^{\Omega}$ defined by:
\begin{equation}
	\sigma^{\Omega}_s(a)=e^{ish_{\Omega}}ae^{-ish_{\Omega}},\quad a\in\A,~\forall s\in\mathbb{R}
\end{equation}
where $h_{\Omega}$ is the modular Hamiltonian of $\ket{\Omega}$ with respect to $\A$, we construct the product Hilbert space
\begin{equation}
	\widehat{\H}\equiv\H \otimes L^2(\mathbb{R})=\left\{\ket{\Psi_f}\equiv\ket{\Psi}\otimes\ket{f}\right\},
\end{equation} 
with the crossed product algebra
\begin{equation}\label{eq:crossed product alg}
	\widehat{\A}_{\Omega}\equiv \A \rtimes_{\sigma^{\Omega}} \mathbb{R}=\left\{a_{P},\mathbbm{1}\otimes e^{itX}\big|a\in \A,t\in\mathbb{R}\right\}'',\quad a_{P}\equiv e^{ih_{\Omega}\otimes P}(a\otimes\mathbbm{1})e^{-ih_{\Omega}\otimes P},
\end{equation}
where the double commutant is to ensure that $\widehat{\A}_{\Omega}$ is a von Neumann algebra. There exists an equivalent definition of crossed product algebra which works as the subalgebra of $\A \otimes \B(L^2(\mathbb{R}))$ fixed under the flow generated by $h_{\Omega}\otimes \id+\id\otimes X$ (we write $h_{\Omega}+ X$ for simplicity) \cite{Jensen:2023yxy}:
\begin{equation}
	\widehat{\A}_{\Omega}\equiv \{\widehat{a}\in\A \otimes \B(L^2(\mathbb{R}))\big|e^{is(h_{\Omega}+ X)}\widehat{a}e^{-is(h_{\Omega}+ X)}=\widehat{a},\forall s\in\Rbb\}.
\end{equation}
In this definition, one can regard $L^2(\mathbb{R})$ as an auxiliary register which keeps track of some ``modular time'' to the original algebra $\A$, and the crossed product construction is equivalent to imposing the gauge symmetry $h_{\Omega}+ X$ to the product algebra $\A \otimes \B(L^2(\mathbb{R}))$. Previous results \cite{Chandrasekaran:2022cip,Chandrasekaran:2022eqq,Jensen:2023yxy} regarding quantum gravity introduce an auxiliary observer to the corresponding region whose degrees of freedom are characterized by $L^2(\mathbb{R})$ with $X$ being the observer Hamiltonian (i.e. $s\in \mathbb{R}$ is the time measured by the observer), while $h_{\Omega}$ corresponds to the geometrical constraints such that $h_{\Omega}+ X$ generates an invariant flow in the original algebra coupled with the observers' degrees of freedom. 

There is a useful theorem stating that for two cyclic separating states $\ket{\Omega_1}$ and $\ket{\Omega_2}$, the crossed product algebras are isometric: $\widehat{\A}_{\Omega_1}\simeq \widehat{\A}_{\Omega_2}$ \cite{Witten:2021unn}. In this sense, the crossed product algebra is unique up to an isometry, while practically one would prefer to use some specific cyclic separating state for either simplifying cumbersome calculations or concerning some physical situations.

The key advantage of applying the crossed product construction to type III factors is that, the crossed product algebra is of type II$_\infty$ \cite{takesaki1973duality,connes1973classification,connes2008noncommutative} which possesses a faithful normal semifinite trace, such that one is able to define a trace and also the density operators for the crossed product algebra.\footnote{For the de Sitter case with a positive-energy observer, the crossed product algebra is of type II$_1$~\cite{Chandrasekaran:2022cip}; while for black hole spacetimes it is in general type II$_\infty$~\cite{Witten:2021unn,Chandrasekaran:2022eqq}. Both cases are covered by the semifinite trace framework.} Following \cite{Jensen:2023yxy} (with $\beta=1$), we define the trace for the crossed product algebra as follows,
\begin{equation}
	\widehat{\Tr}(\widehat{a})\equiv 2\pi \left(\bra{\Omega}\otimes\bra{0}_{P}\right)e^{X}\widehat{a}\left(\ket{\Omega}\otimes\ket{0}_{P}\right),\quad \forall \widehat{a}\in\widehat{\A}_{\Omega}
\end{equation}
where $\ket{0}_{P}$ is the zero-momentum eigenstate of $L^2(\mathbb{R})$. In the position basis, one can apply \eqref{eq:xp} and find that
\begin{equation}
	\begin{aligned}
		\widehat{\Tr}(\widehat{a})=& 2\pi \left(\bra{\Omega}\otimes\bra{0}_{P}\right)e^{X}\widehat{a}\left(\ket{\Omega}\otimes\ket{0}_{P}\right)\\
		=&2\pi \int dxdx' \left(\bra{\Omega}\otimes\bra{0}_{P}\right)\left(\ketbra{x}{x}\right)e^{X}\widehat{a}\left(\ketbra{x'}{x'}\right)\left(\ket{\Omega}\otimes\ket{0}_{P}\right)\\
		=&\int dxdx'\bra{\Omega}\bra{x}e^{X}\widehat{a}\ket{x'}\ket{\Omega}\\
		=&\int dxdx'\delta(x-x')e^{x}\bra{\Omega}\widehat{a}\ket{\Omega}\\
		=&\int dx ~e^{x}\bra{\Omega}\widehat{a}\ket{\Omega}
	\end{aligned}
\end{equation}
which coincides with an alternative definition of the trace function in \cite{Witten:2021unn,AliAhmad:2023etg}. For a state $\ket{\Psi_{f}}\in\widehat{\H}$, the density operator $\rho_{\Psi_{f}}$ affiliated to $\widehat{\A}_{\Omega}$ is then defined by the relation
\begin{equation}\label{eq:rho-trace}
	\widehat{\Tr}\bigl(\rho_{\Psi_{f}}\,\widehat{a}\bigr)
	=\diracprod{\Psi_{f}}{\widehat{a}}{\Psi_{f}},\quad\forall\widehat{a}\in\widehat{\A}_{\Omega}.
\end{equation}


%

\section{Semiclassical description}\label{sec:semicl}

In this section, we demonstrate how to practically apply the semiclassical approximation in the presence of crossed product algebra, and one can consult \cite{Chandrasekaran:2022cip,Chandrasekaran:2022eqq,Jensen:2023yxy} for more details. After imposing the approximation, we then prove that the relative entropy between two states in the crossed product algebra approximately factorizes into a sum of two relative entropies: the relative entropy intrinsic to the original algebra, and the relative entropy quantifying the distinguishability between the respective observers.

\subsection{Semiclassical states}\label{sec:semicl-states}

While promoting the algebra into the crossed product algebra via introducing an auxiliary observer with a wavefunction $f\in L^2(\mathbb{R})$, $\ket{\Psi}$ is promoted into $\ket{\Psi_f}\equiv\ket{\Psi}\otimes\ket{f}$ and the condition of semiclassical approximation is to select a special class of observers whose degrees of freedom are not strongly entangled with the quantum field degrees of freedom. One can think of it as follows, the degrees of freedom between the observer and the state, which are presented as operators in the respective algebras in the algebraic language, are entangled due to the conjugation $e^{ih_{\Omega}\otimes P}(a\otimes\mathbbm{1})e^{-ih_{\Omega}\otimes P}$ in the crossed product algebra \eqref{eq:crossed product alg}. Therefore, we can require that $f(x)$ is slowly varying such that due to the uncertainty principle its Fourier mode $\F_{f}(p)$ is sharply peaked around some specific momentum, which we can set to be $p=0$ without loss of generality.

Our practical treatments follow \cite{Chandrasekaran:2022cip,Chandrasekaran:2022eqq}. We first consider a state $\ket{\Psi_{f}}\subset\widehat{\H}$ and introduce a dimensionless infinitesimal parameter $\epsilon$ with $0<\epsilon\ll 1$, which is to parameterize the degree to which the observer's degrees of freedom are entangled with the quantum field degrees of freedom. Note the standard treatments require $\epsilon<\beta$ while we have set $\beta=1$ in the paper. We then define the semiclassical states to take the form of $\ket{\Psi_{f}}\subset\widehat{\H}$ with
\begin{equation}\label{eq:f-param}
	f(x)=\epsilon^{1/2}\,k(\epsilon x),
\end{equation}
where $k\in L^2(\Rbb)$ is a also smooth, bounded normalized function such that
\begin{equation}
	\int_{-\infty}^{\infty} dx\,\abs{f(x)}^{2}=\int_{-\infty}^{\infty} dy\,\abs{k(y)}^{2}=1.
\end{equation}
The physical interpretation is that $\abs{f(x)}^{2}$ is mostly supported for $x\sim O(1/\epsilon)$, so that in momentum space the Fourier transform $\F_{f}(p)$ is sharply peaked around $p=0$ with width $O(\epsilon)$ due to the uncertainty principle, as required by the semiclassical approximation. 

The key observation is that for semiclassical states, the conjugation by $e^{ih_{\Omega}\otimes P}$ in \eqref{eq:crossed product alg} approximately acts as the identity when evaluated in expectation values. To be a bit qualitative,
\begin{equation}\label{eq:momentum-bound}
	\|P\ket{f}\|
	=\Bigl(\int_{-\infty}^{\infty} dp\;p^{2}\abs{\F_{f}(p)}^{2}\Bigr)^{1/2}
	=\epsilon\;\Bigl(\int_{-\infty}^{\infty} dq\;q^{2}\abs{\F_{k}(q)}^{2}\Bigr)^{1/2}
	=O(\epsilon),
\end{equation}
which implies that $\big\| \bigl(\id-e^{-ih_{\Omega}\otimes P}\bigr)\ket{\Psi_{f}} \big\|
= O(\epsilon)$. We emphasize that the above analysis is a state-level statement about matrix elements between semiclassical states. It involves neither an algebraic-level approximation of the von Neumann algebra $\widehat{\A}_{\Omega}$ by a tensor product algebra, nor a modular-level approximation in which the modular operator factorizes.

\subsection{Factorization of relative entropy}\label{sec:factorization}

As already shown in section 3 of \cite{Chandrasekaran:2022cip} (with $\beta=1$), the approximate density operator for the semiclassical state $\ket{\Psi_f}$ is given by
	\begin{equation}\label{eq:rho-approx}
		\rho_{\Psi_{f}}
		=\overline{f}\bigl(X+h_{\Omega}\bigr)\,
		e^{-X}\,\Delta_{\Psi|\Omega}\,
		f\bigl(X+h_{\Omega}\bigr)+O(\epsilon)=\abs{f\bigl(X+h_{\Omega}\bigr)}^{2}\,
		e^{-X}\,\Delta_{\Psi|\Omega}+O(\epsilon).
	\end{equation}
Taking the logarithm of \eqref{eq:rho-approx}, we then obtain
\begin{equation}\label{eq:-log-rho-Psi}
	-\log\rho_{\Psi_{f}} =h_{\Psi|\Omega}+X-\log\abs{f(X)}^{2}+O(\epsilon),
\end{equation}
where we use the fact $f$ is slowly varying $f(X+h_{\Omega})\approx f(X)$. Now consider another semiclassical state $\ket{\Phi_{g}}$ with
\begin{equation}\label{eq:-log-rho-Phi}
	-\log\rho_{\Phi_{g}} =
	h_{\Phi|\Omega}+X-\log\abs{g(X)}^{2}+O(\epsilon),
\end{equation}
and recall the relative entropy in type II factors can take the form of \eqref{def:relative entropy}, we have
\begin{equation}\label{eq:cal1}
	\begin{aligned}
	\Srel(\Psi_{f}|\Phi_{g};\widehat{\A}_{\Omega})
&=\diracprod{\Psi_{f}}{\log\rho_{\Psi_{f}}}{\Psi_{f}}-\diracprod{\Psi_{f}}{\log\rho_{\Phi_{g}}}{\Psi_{f}}\\
&=\diracprod{\Psi}{\left(h_{\Phi|\Omega}-h_{\Psi|\Omega}\right)}{\Psi}+\int_{\Rbb} dx\,\abs{f(x)}^{2}\,
\log\frac{\abs{f(x)}^{2}}{\abs{g(x)}^{2}}+O(\epsilon)\\&=\diracprod{\Psi}{\left(h_{\Phi|\Omega}-h_{\Psi|\Omega}\right)}{\Psi}+\Srel(f|g)+O(\epsilon)
	\end{aligned}
\end{equation}
where $\Srel(f|g)=\int_{\Rbb} dx\,\abs{f(x)}^{2}\,
\log\frac{\abs{f(x)}^{2}}{\abs{g(x)}^{2}}$ is the classical relative entropy between the probability densities $\abs{f}^{2}$ and $\abs{g}^{2}$ in $L^{2}(\Rbb)$. To proceed, we make use of the Connes cocycle identity \eqref{eq:chain rule id}:
\begin{equation}
	h_{\Phi|\Psi}=h_{\Phi|\Omega}-h_{\Psi|\Omega}+h_{\Psi},
\end{equation}
then we have
\begin{equation}
	\diracprod{\Psi}{\left(h_{\Phi|\Omega}-h_{\Psi|\Omega}\right)}{\Psi}=\diracprod{\Psi}{h_{\Phi|\Psi}}{\Psi}-\diracprod{\Psi}{h_{\Psi}}{\Psi}=\Srel(\Psi|\Phi;\A)-0.
\end{equation}
Substituting it into \eqref{eq:cal1}, we arrive at
\begin{equation} 
	\begin{aligned}
		\Srel(\Psi_{f}|\Phi_{g};\widehat{\A}_{\Omega})
		=\Srel(\Psi|\Phi;\A)+\Srel(f|g)+O(\epsilon)
	\end{aligned},
\end{equation}
which is the desired result.

For the general case $f\neq g$, the cross term $\Srel(f|g)$ is the classical distinguishability between the two observers' wavefunctions. When the wavefunctions are identical, the relative entropy in the crossed product algebra reduces, at leading order in $\epsilon$, to the relative entropy in the original algebra,
\begin{equation}\label{eq:rel-ent-same-f}
	\Srel(\Psi_{f}|\Phi_{f};\widehat{\A}_{\Omega})
	=\Srel(\Psi|\Phi;\A)+O(\epsilon),
\end{equation}
which matches the expectation that, within the semiclassical regime, the observer merely serves as an algebraic ``regulator''~\cite{Kudler-Flam:2023hkl,Kudler-Flam:2023qfl} that renders the type~III factors to type~II factors while leaving the distinguishability of the underlying quantum field states intact.

\section{Algebraic reconstruction theorem in type III cases}\label{sec:alg}

Now we are ready to extend the algebraic reconstruction theorem to type III cases via the crossed product method. We first review the algebraic reconstruction theorem in type I/II cases, see \cite{Kang:2018xqy,Kang:2019dfi,Xu:2024xbz} for more details. Our setup is as follows:
\begin{itemize}
	\item Let $\A_{code},\A_{phys}$ be von Neumann factors of type I/II on $\H_{code},\H_{phys}$ respectively, with $\A'_{code},\A'_{phys}$ respectively being the commutants. Let $V: \H_{code}\rightarrow \H_{phys}$ be an isometry which induces von Neumann factors $V\A_{code}V^\dagger,V\A'_{code}V^\dagger$ of type I/II on the image of $V$ denoted as $\im(V)=V(\H_{code})\subset \H_{phys}$.

	\item Notations: vectors in $\H_{code}$ and operators in $\A_{code}$ are labeled by the tilde sign, and operators in commutants are labeled by the prime sign. For example, $\ket{\widetilde{\Psi}}\in\H_{code}$, $\widetilde{\O}\in \A_{code},\widetilde{\O}'\in\A'_{code}$. (For simplicity, relative operators on $\H_{code}$ are not labeled by the tilde sign, but one can tell from the states they apply.)

	\item Suppose that the set of cyclic and separating vectors w.r.t. $\A_{code}$ is dense in $\H_{code}$ ($\Leftrightarrow$ the set of cyclic and separating vectors w.r.t. $\A'_{code}$ is dense in $\H_{code}$).

	\item Suppose that if $\ket{\widetilde{\Psi}}\in\H_{code}$ is cyclic and separating w.r.t. $\A_{code}$, then $V\ket{\widetilde{\Psi}}\in\H_{phys}$ is cyclic and separating w.r.t. $\A_{phys}$.

\end{itemize}

\begin{thm}\label{thm:alg}
	The following statements are equivalent \footnote{One can see that the theorem is ``symmetric'' between the algebras and the corresponding commutants, in the sense that the operator relations in the algebras have their counterparts in the commutants. Therefore, the theorem sometimes has a suffix ``with complementary recovery'' in the literatures, see e.g. \cite{Harlow:2016vwg}.}: 
	\begin{enumerate}
		\item For any $\widetilde{\O}\in\A_{code},\widetilde{\O}'\in\A_{code}'$, there exist $\O\in \A_{phys},\O'\in \A_{phys}'$ such that
		\begin{equation}
			V\widetilde{\O}\ket{\widetilde{\Psi}}=\O V\ket{\widetilde{\Psi}}, V\widetilde{\O}'\ket{\widetilde{\Psi}}=\O' V\ket{\widetilde{\Psi}},\forall \ket{\widetilde{\Psi}}\in\H_{code}.
		\end{equation}
		
		\item For any $\ket{\widetilde{\Psi}},\ket{\widetilde{\Phi}}\in\H_{code}$ with $\ket{\widetilde{\Psi}},\ket{\widetilde{\Phi}}$ both cyclic and separating w.r.t. $\A_{code}$, 
		\begin{equation}\label{eq:JLMS}
			\begin{aligned}
				&\Srel(\widetilde{\Psi}|\widetilde{\Phi};\A_{code})=\Srel(\Psi|\Phi;\A_{phys}),\\&\Srel(\widetilde{\Psi}|\widetilde{\Phi};\A'_{code})=\Srel(\Psi|\Phi;\A'_{phys}),
			\end{aligned}
		\end{equation}
		where $\ket{\Psi}\equiv V\ket{\widetilde{\Psi}},\ket{\Phi}\equiv V\ket{\widetilde{\Phi}}$.
		
		\item For any $\P \in\A_{phys},\P '\in\A_{phys}'$,
		\begin{equation}
			V^\dagger \P V\in\A_{code},\quad V^\dagger \P 'V\in\A_{code}'.
		\end{equation}

		\item

		For any cyclic and separating $\ket{\widetilde{\Psi}}\in\H_{code}$  w.r.t. $\A_{code}$, we have
		\begin{equation}\label{eq:statement 4}
			\begin{aligned}
				&S(\widetilde{\Psi};\A_{code})=S(\Psi;\A_{phys}),\\ &S(\widetilde{\Psi};\A_{code}')=S(\Psi;\A_{phys}').
			\end{aligned}
		\end{equation}
	\end{enumerate}
\end{thm}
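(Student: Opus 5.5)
The statement immediately above is Theorem \ref{thm:alg}, the type I/II algebraic reconstruction theorem. The equivalence of statements 1, 2 and 3 is already established in \cite{Kang:2018xqy,Kang:2019dfi} for factors of arbitrary type. I will take that as given and only need to attach statement 4 to the chain. I plan to prove $2\Rightarrow 4$ and $4\Rightarrow 2$, working separately for the algebra and for its commutant; the commutant case is identical with primes added.

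For $2\Rightarrow 4$, I treat the finite case (type I$_n$/II$_1$) first. There the entropy is $S(\Psi;\A)=-\Srel(\Psi|\tau;\A)$ with $\ket{\tau}$ the tracial vector. The code-space tracial vector $\ket{\widetilde\tau}$ maps to $V\ket{\widetilde\tau}$, and I need $V\ket{\widetilde\tau}$ to induce a tracial state on $\A_{phys}$.

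\begin{itemize}
\item I use statement 3, which is equivalent to statement 2. For $\P,\mathcal{Q}\in\A_{phys}$, the value $\bra{\widetilde\tau}V^\dagger\P\mathcal{Q}V\ket{\widetilde\tau}$ has to be rewritten through the reconstruction of statement 1, so that products of physical operators acting on $\im(V)$ become images of code operators.
\item The state restricted to $V\A_{code}V^\dagger$ is then tracial.
\item Because the trace on a factor is unique up to normalization, and entropies are defined only up to an additive constant, applying statement 2 with $\widetilde\Phi=\widetilde\tau$ gives $S(\widetilde\Psi;\A_{code})=S(\Psi;\A_{phys})$, up to that constant.
\item If $\ket{\widetilde\tau}$ is not itself cyclic and separating, I approximate it by cyclic and separating vectors using the density assumption, and use lower semicontinuity of relative entropy.
\end{itemize}

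In the semifinite case (I$_\infty$/II$_\infty$), no normalized tracial vector exists. I instead follow \cite{Xu:2024xbz} and use the density operators directly. From statement 3, the map $\P\mapsto V^\dagger\P V$ is a normal conditional expectation-like map from $\A_{phys}$ into $\A_{code}$. I then show $\rho_{\Psi}$ and $\widetilde\rho_{\widetilde\Psi}$ are related by $V\widetilde\rho V^\dagger=\rho$ on $\im(V)$, after fixing the relative normalization of the two traces. This gives $-\bra{\Psi}\log\rho_\Psi\ket{\Psi}=-\bra{\widetilde\Psi}\log\widetilde\rho_{\widetilde\Psi}\ket{\widetilde\Psi}$.

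For $4\Rightarrow 2$, I write the relative entropy as $\Srel(\Psi|\Phi)=-S(\Psi)-\bra{\Psi}\log\rho_\Phi\ket{\Psi}$. The first term is matched by statement 4. For the second term I use a first-law/perturbation argument: consider $\ket{\Psi_\lambda}$ interpolating between $\Phi$ and mixtures with $\Psi$ (kept cyclic and separating by density), and differentiate $S(\Psi_\lambda)$ at $\lambda=0$. This yields $-\bra{\Psi}\log\rho_\Phi\ket{\Psi}$ up to normalization terms. Equality of entropies for all states along the family then forces equality of the cross terms, so statement 2 holds.

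I expect this last step to be the main obstacle. Vector mixtures are not convex combinations of states, and the differentiation must be justified for unbounded $\log\rho$, which requires domain and continuity arguments in the semifinite case. The argument must also cover complementary recovery, that is, simultaneous control on $\A'$.
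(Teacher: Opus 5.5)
\textbf{Overall.} The paper gives no argument of its own for this theorem. It cites \cite{Kang:2018xqy,Kang:2019dfi} for $1\Leftrightarrow2\Leftrightarrow3$ and \cite{Xu:2024xbz} for statement 4, so taking the first three equivalences as given matches the paper. The problems are in how you attach statement 4, and they occur in both directions.

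\textbf{$2\Rightarrow4$: the traciality step fails.} Your key claim, that $V\ket{\widetilde\tau}$ induces a tracial state on $\A_{phys}$, is false in general. Take the following example:
\begin{itemize}
\item $\A_{code}=\B(\H_a)\otimes\id$ on $\H_a\otimes\H_{\bar a}$, with $\dim\H_a=d<\infty$;
\item $\A_{phys}=\B(\H_a\otimes\H_{A_2})\otimes\id$ on $(\H_a\otimes\H_{A_2})\otimes(\H_{\bar a}\otimes\H_{\bar A_2})$;
\item $V\ket{\widetilde\psi}=\ket{\widetilde\psi}\otimes\ket{\chi}$, with $\ket{\chi}$ of full but non-uniform Schmidt spectrum.
\end{itemize}
Statements 1--3 hold, since $\widetilde\O$ is reconstructed by $\widetilde\O\otimes\id$. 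Yet the density of $V\ket{\widetilde\tau}$ on $\A_{phys}$ is proportional to $\id\otimes\chi_{A_2}$, which is not tracial.

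Your rewriting of $\bra{\widetilde\tau}V^\dagger\P\mathcal{Q}V\ket{\widetilde\tau}$ breaks down because general $\P,\mathcal{Q}\in\A_{phys}$ do not preserve $\im(V)$, so $V^\dagger\P\mathcal{Q}V\neq V^\dagger\P VV^\dagger\mathcal{Q}V$. Traciality on $V\A_{code}V^\dagger$ is automatic, but it does not help: $\Srel(\Psi|V\widetilde\tau;\A_{phys})$ depends on the state on all of $\A_{phys}$. What statement 2 with $\widetilde\Phi=\widetilde\tau$ actually gives is
\[
S(\widetilde\Psi;\A_{code})=S(\Psi;\A_{phys})+\bra{\Psi}\log\rho_{V\widetilde\tau}\ket{\Psi}.
\]
The real content of statement 4 is that the last term, the area term, is independent of $\widetilde\Psi$. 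With the standard trace in the example it equals $-\log d-S(\chi_{A_2})$. This is not a trace-normalization ambiguity, and it is exactly where factoriality and complementary recovery must be used.

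One way to establish it:
\begin{itemize}
\item Statement 1 with complementary recovery gives the intertwining $V\Delta^{is}_{\widetilde\Phi|\widetilde\Psi}=\Delta^{is}_{\Phi|\Psi}V$.
\item At $\widetilde\Phi=\widetilde\Psi=\widetilde\tau$ one has $\Delta_{\widetilde\tau}=\id$. With compatibly normalized traces, $\Delta_{V\widetilde\tau}=\rho_{V\widetilde\tau}\,\rho'^{-1}_{V\widetilde\tau}$, where $\rho'$ is the density on $\A'_{phys}$.
\item Hence $\rho^{is}_{V\widetilde\tau}V=\rho'^{\,is}_{V\widetilde\tau}V$.
\item By statement 3, $V^\dagger\rho^{is}_{V\widetilde\tau}V\in\A_{code}\cap\A'_{code}=\C\id$. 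Differentiating at $s=0$ gives the claim.
\end{itemize}

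Your semifinite argument has the same defect. The relation $V\widetilde\rho V^\dagger=\rho$ on $\im(V)$ fails in the example above: $\rho_\Psi=\widetilde\rho_{\widetilde\Psi}\otimes\chi_{A_2}$ does not even preserve $\im(V)$. Only the expectation values of $\log\rho$ agree, up to the same state-independent shift, and that shift still has to be proved constant.

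\textbf{$4\Rightarrow2$: this direction cannot be repaired.} The issue is not technical. With statement 4 imposed only on vectors of $\H_{code}$, the implication is false. Let $\H_{code}=\H_{phys}=\C^n\otimes\C^n$ and $\A_{code}=\A_{phys}=M_n(\C)\otimes\id$, and let $V$ be the swap of the two factors. Then:
\begin{itemize}
\item all standing hypotheses hold;
\item statement 4 holds exactly, since the two marginals of a pure state are isospectral;
\item but $V^\dagger(\P\otimes\id)V=\id\otimes\P\notin\A_{code}$, so statement 3 fails, and with it statement 2.
\end{itemize}

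Your first-law argument is where this shows up. A vector $\ket{\widetilde\chi_\lambda}$ implementing $(1-\lambda)\omega_{\widetilde\Phi}+\lambda\omega_{\widetilde\Psi}$ on $\A_{code}$ exists. But $V\ket{\widetilde\chi_\lambda}$ implements the corresponding mixture on $\A_{phys}$ only if $V^\dagger\A_{phys}V\subseteq\A_{code}$, which is statement 3 itself. In the swap example, what appears instead is the $\A'_{code}$-marginal of $\widetilde\chi_\lambda$, which depends on the chosen purification.

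Harlow's finite-dimensional theorem avoids this because its RT formula is required for mixed code states, or equivalently with a reference system. To get a genuine equivalence you must strengthen statement 4 in that way. For example, demand it for $\A_{code}\otimes\id_R$ and $V\otimes\id_R$ on $\H_{code}\otimes\H_R$.
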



The equivalence among the first three statements holds for factors of all types without any approximation~\cite{Kang:2018xqy,Kang:2019dfi}. The equivalence with statement 4 was proved in~\cite{Xu:2024xbz} for type~I/II factors only, since type~III factors lack a well-defined von~Neumann entropy (section~\ref{sec:alg entropies}). Our task is to extend statement 4 to type~III by promoting the algebras to their crossed products and invoking the semiclassical approximation developed in section~\ref{sec:semicl}. To be explicit, in type III cases we have 
\begin{itemize}
	\item Let $\A_{code},\A_{phys}$ be von Neumann factors of type III on $\H_{code},\H_{phys}$ respectively, with $\A'_{code},\A'_{phys}$ respectively being the commutants. Let $V: \H_{code}\rightarrow \H_{phys}$ be an isometry.
	
\end{itemize}

Then we introduce additional setups as follows:
\begin{itemize}

	\item Consider a cyclic and separating state $\ket{\widetilde{\Omega}}\in\H_{code}$, we construct the crossed product algebras:
	\begin{equation}
		\widehat{\A}_{code;\widetilde{\Omega}}\equiv \A_{code} \rtimes_{\sigma^{\widetilde{\Omega}}} \mathbb{R},\quad \widehat{\A}_{phys;\Omega}\equiv \A_{phys} \rtimes_{\sigma^{\Omega}} \mathbb{R}\quad \text{where}\quad \ket{\Omega}\equiv V\ket{\widetilde{\Omega}}
	\end{equation}
	which act respectively on
	\begin{equation}
		\widehat{\H}_{code}\equiv\H_{code} \otimes L^2(\mathbb{R}),\quad \widehat{\H}_{phys}\equiv\H_{phys} \otimes L^2(\mathbb{R}).
	\end{equation} 
	We call the pair $(\widehat{\A}_{code;\widetilde{\Omega}},\widehat{\A}_{phys;\Omega})$ the holographic crossed product algebras with respect to $\ket{\Omega}$.

	\item Let $\overline{V}
	:L^2(\mathbb{R})\rightarrow L^2(\mathbb{R})$ be an isometry such that $\widehat{V}=V\otimes \overline{V}:\widehat{\H}_{code}\rightarrow \widehat{\H}_{phys}$ is also an isometry. We use $\ket{\widetilde{f}}$ to denote the element in the $L^2(\Rbb)$ part of $\widehat{\H}_{code}$ then $\ket{f}\equiv \overline{V}\ket{\widetilde{f}}$ is the image under $\overline{V}$.

	\item The semiclassical approximation: a wavefunction is semiclassical if it admits the parameterization $f(x)=\epsilon^{1/2}k(\epsilon x)$ with $0<\epsilon\ll 1$ and $k\in L^2(\Rbb)$. We assume throughout this section that both $\widetilde{f}$ and $f=\overline{V}\widetilde{f}$ are semiclassical wavefunctions with the same small parameter $\epsilon$, and that their probability densities satisfy
	\begin{equation}\label{eq:match-density}
		\begin{aligned}
			&\abs{\widetilde{f}(x)}^{2}=\abs{f(x)}^{2}+O(\epsilon),\quad\forall x\in\Rbb.\\
			&\int_{\Rbb} \abs{\widetilde{f}(x)}^{2} dx=\int_{\Rbb} \abs{f(x)}^{2} dx=1,
		\end{aligned}
	\end{equation}
	such that $\widetilde{f}\in L^2(\Rbb)$. For physical interests, we hereafter choose \eqref{eq:match-density} to take
	\begin{equation}\label{eq:match-density2}
		\overline{V}=\id_{L^{2}(\Rbb)}\quad \Rightarrow\quad \widetilde{f}=f,
	\end{equation}
	which is the case of a single observer whose wavefunction is unaffected by the code-to-physical isometry.
\end{itemize}

\begin{thm}\label{thm:alg2}
	The following statements are equivalent up to $O(\epsilon)$	: 
	\begin{enumerate}
		\item For any $\ket{\widetilde{\Psi}},\ket{\widetilde{\Phi}}\in\H_{code}$ with $\ket{\widetilde{\Psi}},\ket{\widetilde{\Phi}}$ both cyclic and separating w.r.t. $\A_{code}$, 
		\begin{equation}\label{eq:eq1}
			\begin{aligned}
				&\Srel(\widetilde{\Psi}|\widetilde{\Phi};\A_{code})=\Srel(\Psi|\Phi;\A_{phys}),\\&\Srel(\widetilde{\Psi}|\widetilde{\Phi};\A'_{code})=\Srel(\Psi|\Phi;\A'_{phys}),
			\end{aligned}
		\end{equation}
		where $\ket{\Psi}\equiv V\ket{\widetilde{\Psi}},\ket{\Phi}\equiv V\ket{\widetilde{\Phi}}$.
		
		\item For any $\ket{\widetilde{\Psi}},\ket{\widetilde{\Phi}}\in\H_{code}$ with $\ket{\widetilde{\Psi}},\ket{\widetilde{\Phi}}$ both cyclic and separating w.r.t. $\A_{code}$, 
		\begin{equation}\label{eq:eq2}
			\begin{aligned}
				&\Srel(\widetilde{\Psi}_{\widetilde{f}}|\widetilde{\Phi}_{\widetilde{f}};\widehat{\A}_{code;\widetilde{\Omega}})=\Srel(\Psi_{f}|\Phi_{f};\widehat{\A}_{phys;\Omega})+O(\epsilon),\\&\Srel(\widetilde{\Psi}_{\widetilde{f}}|\widetilde{\Phi}_{\widetilde{f}};\widehat{\A'}_{code;\widetilde{\Omega}})=\Srel(\Psi_{f}|\Phi_{f};\widehat{\A'}_{phys;\Omega})+O(\epsilon).
			\end{aligned}
		\end{equation}

		\item

		For any cyclic and separating $\ket{\widetilde{\Psi}}\in\H_{code}$ w.r.t. $\A_{code}$, we have
		\begin{equation}\label{eq:statement 4}
			\begin{aligned}
				&S(\widetilde{\Psi}_{\widetilde{f}};\widehat{\A}_{code;\widetilde{\Omega}})=S(\Psi_{f};\widehat{\A}_{phys;\Omega})+O(\epsilon),\\ &S(\widetilde{\Psi}_{\widetilde{f}};\widehat{\A'}_{code;\widetilde{\Omega}})=S(\Psi_{f};\widehat{\A'}_{phys;\Omega})+O(\epsilon).
			\end{aligned}
		\end{equation}
	\end{enumerate}
\end{thm}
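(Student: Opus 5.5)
I will prove the cycle $1\Rightarrow 2\Rightarrow 3\Rightarrow 1$, all up to $O(\epsilon)$. The main tool is the factorization formula of section \ref{sec:factorization} with equal wavefunctions, \eqref{eq:rel-ent-same-f}, applied on both the code and the physical side. For the commutant statements I would first note that $\widehat{\A'}_{code;\widetilde{\Omega}}$ is the crossed product of $\A'_{code}$ by its own modular flow; the relevant modular Hamiltonian is $h'_{\widetilde{\Omega}}=-h_{\widetilde{\Omega}}$. The analysis of section \ref{sec:semicl} goes through verbatim with $\A\to\A'$, with a possible sign change $X\to -X$ in the trace, which does not affect the argument. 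Throughout I use \eqref{eq:match-density2}, so $\widetilde{f}=f$.

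\emph{Step $1\Rightarrow 2$.} Apply \eqref{eq:rel-ent-same-f} to the code side:
\begin{equation*}
\Srel(\widetilde{\Psi}_{\widetilde{f}}|\widetilde{\Phi}_{\widetilde{f}};\widehat{\A}_{code;\widetilde{\Omega}})=\Srel(\widetilde{\Psi}|\widetilde{\Phi};\A_{code})+O(\epsilon).
\end{equation*}
Apply it to the physical side as well; this is legitimate because $\ket{\Psi},\ket{\Phi},\ket{\Omega}$ are cyclic and separating for $\A_{phys}$ by the standing assumption on $V$. Then \eqref{eq:eq1} gives \eqref{eq:eq2}, and the same argument works for the commutants. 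Conversely, $2\Rightarrow 1$ follows from the same two identities, but only to $O(\epsilon)$. Since the left-hand sides of \eqref{eq:eq1} do not depend on $\epsilon$, letting $\epsilon\to 0$ should upgrade this to exact equality.

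\emph{Step $2\Rightarrow 3$.} I would express the entropy as a relative entropy to a reference. By \eqref{eq:-log-rho-Psi},
\begin{equation*}
S(\Psi_f;\widehat{\A}_{phys;\Omega})=\diracprod{\Psi_f}{-\log\rho_{\Psi_f}}{\Psi_f}=-\Srel(\Psi|\Omega;\A_{phys})+\diracprod{f}{X}{f}-\int dx\,\abs{f}^2\log\abs{f}^2+O(\epsilon).
\end{equation*}
Here I used $\diracprod{\Psi}{h_{\Psi|\Omega}}{\Psi}$ together with the cocycle identity \eqref{eq:chain rule id} in the form $h_{\Omega|\Psi}=h_{\Omega|\Omega}-h_{\Psi|\Omega}+h_\Psi$, which gives $\diracprod{\Psi}{h_{\Psi|\Omega}}{\Psi}=-\Srel(\Psi|\Omega)+\diracprod{\Psi}{h_\Omega}{\Psi}$. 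I expect this bookkeeping, in particular how $h_\Omega$ versus the shift $X+h_\Omega$ is absorbed at $O(\epsilon)$, to be the \textbf{main obstacle}. To handle it, I would keep $\diracprod{\Psi}{h_\Omega}{\Psi}$ explicit and show it matches between the two sides, using the relation $h_{\Omega}V=Vh_{\widetilde{\Omega}}$. That relation follows from statement 1 of Theorem \ref{thm:alg} via the known equivalences of \cite{Kang:2018xqy,Kang:2019dfi}, which preserve modular data on $\im(V)$. The observer terms are identical on both sides because $\widetilde{f}=f$. Taking $\widetilde{\Phi}=\widetilde{\Omega}$ in statement 2 then matches the $\Srel(\cdot|\Omega)$ terms, which yields statement 3.

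\emph{Step $3\Rightarrow 1$.} I would follow the strategy of \cite{Xu:2024xbz}. Apply statement 3 to the dense family of states and to their superpositions/perturbations $\ket{\widetilde{\Psi}}\to\ket{\widetilde{\Psi}}+\delta\widetilde{\O}\ket{\widetilde{\Psi}}$. Using the first-order variation of $S$, which by the formula above is the expectation value of $-\log\rho_{\Psi_f}\approx h_{\Psi|\Omega}+X-\log\abs{f(X)}^2$, one deduces that $V^\dagger(h_{\Psi|\Omega}+X)V$ equals $h_{\widetilde{\Psi}|\widetilde{\Omega}}+X$ up to $O(\epsilon)$. The $X$ and $f$ pieces cancel identically, so this gives the equality of relative modular Hamiltonians, i.e., $h_{\Psi|\Omega}V=Vh_{\widetilde{\Psi}|\widetilde{\Omega}}$. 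By \eqref{eq:chain rule id}, this propagates to $h_{\Phi|\Psi}$ and hence to \eqref{eq:eq1}. Finally, the $\epsilon$-independence of these objects removes the $O(\epsilon)$ error.
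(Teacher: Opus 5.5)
Your $1\Leftrightarrow 2$ is the paper's argument: apply \eqref{eq:rel-ent-same-f} on both sides, then use that the original-algebra relative entropies do not depend on $\epsilon$. Your $2\Rightarrow 3$ is more explicit than the paper's, which simply applies Theorem~\ref{thm:alg} to the type II pair $(\widehat{\A}_{code;\widetilde{\Omega}},\widehat{\A}_{phys;\Omega})$ with $\widehat{V}=V\otimes\id$. Your route works, and the ``main obstacle'' is not really there. Since $h_{\Psi|\Omega}=-h'_{\Omega|\Psi}$, you have $\diracprod{\Psi}{h_{\Psi|\Omega}}{\Psi}=-\Srel(\Psi|\Omega;\A')$. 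Equivalently, $\diracprod{\Psi}{h_{\Omega}}{\Psi}=\Srel(\Psi|\Omega;\A)-\Srel(\Psi|\Omega;\A')$; note that this $\diracprod{\Psi}{h_\Omega}{\Psi}$ term is missing from your displayed formula. Hence $S(\Psi_f;\widehat{\A}_{phys;\Omega})=-\Srel(\Psi|\Omega;\A'_{phys})+\diracprod{f}{X}{f}-\int dx\,\abs{f}^2\log\abs{f}^2+O(\epsilon)$, and symmetrically for the commutant. Take the commutant (resp.\ algebra) half of statement 2 at $\widetilde{\Phi}=\widetilde{\Omega}$ and combine it with \eqref{eq:rel-ent-same-f}; this matches the entropies directly. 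You need neither $h_\Omega V=Vh_{\widetilde{\Omega}}$ nor the appeal to \cite{Kang:2018xqy,Kang:2019dfi}.

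The genuine gap is in $3\Rightarrow 1$.

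\emph{What statement 3 actually gives.} By the identity above, statement 3 is, at leading order, only JLMS with the second argument frozen at $\Omega$: $\Srel(\widetilde{\Psi}|\widetilde{\Omega};\A'_{code})=\Srel(\Psi|\Omega;\A'_{phys})$, plus its commutant analogue. What you must show is that this implies JLMS for arbitrary pairs $(\Psi,\Phi)$. Differentiating in $\widetilde{\Psi}$ with $f$ fixed yields at most the compressed, single-vector identity $V^\dagger h_{\Psi|\Omega}V\ket{\widetilde{\Psi}}=h_{\widetilde{\Psi}|\widetilde{\Omega}}\ket{\widetilde{\Psi}}$.

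\emph{Why this is not enough.} The single-vector identity is not the operator statement $h_{\Psi|\Omega}V=Vh_{\widetilde{\Psi}|\widetilde{\Omega}}$. Passing from $V^\dagger h_{\Psi|\Omega}V$ to $h_{\Psi|\Omega}V$ requires $h_{\Psi|\Omega}$ to preserve $\im(V)$, which is essentially the reconstruction property you are trying to prove. Moreover, you know the identity only on the single vector $\ket{\widetilde{\Psi}}$ that defines the operator. So the chain-rule step fails. The formula $\Srel(\Psi|\Phi;\A)=\diracprod{\Psi}{h_{\Phi|\Omega}}{\Psi}-\diracprod{\Psi}{h_{\Psi|\Omega}}{\Psi}$ needs the cross term $\diracprod{\Psi}{h_{\Phi|\Omega}}{\Psi}$: an operator fixed by $\Phi$, evaluated in a different state. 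Your identity at $\Phi$ only controls matrix elements $\bra{\cdot}h_{\Phi|\Omega}\ket{\Phi}$.

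\emph{Why it cannot be patched directly.} Extracting that cross term is the real content of the type I/II result. It is a Harlow-type differentiation of the entropy equality along a family carrying $\Phi$ toward $\Psi$. In the vector-state setting, the natural such family is $\ket{\widetilde{\Phi}}+\widetilde{\O}'\ket{\widetilde{\Phi}}$ with $\widetilde{\O}'\in\A'_{code}$. Using it requires $V\widetilde{\O}'\ket{\widetilde{\Phi}}=\O'V\ket{\widetilde{\Phi}}$, i.e.\ commutant reconstruction, so the argument is circular without further input.

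The paper sidesteps this by invoking Theorem~\ref{thm:alg} (the result of \cite{Xu:2024xbz}) for the type II crossed products. To close your proof you must either do the same, or supply the cross-term argument explicitly. If you do the same, check that statement 3 is available on a class of crossed-product states large enough for that theorem, not only on $\widetilde{\Psi}\otimes f$ with one fixed $f$.
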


\begin{proof}
	We prove the equivalence of the three statements.
	\begin{itemize}
		\item (1) $\Leftrightarrow$ (2).
		For the code side, the semiclassical factorization \eqref{eq:rel-ent-same-f} applied to the crossed product with the same observer wavefunction in both states gives
		\begin{equation}\label{eq:code-side}
			\Srel(\widetilde{\Psi}_{\widetilde{f}}|\widetilde{\Phi}_{\widetilde{f}};\widehat{\A}_{code;\widetilde{\Omega}})
			=\Srel(\widetilde{\Psi}|\widetilde{\Phi};\A_{code})+O(\epsilon).
		\end{equation}
		Similarly, for the physical side,
		\begin{equation}\label{eq:phys-side}
			\Srel(\Psi_{f}|\Phi_{f};\widehat{\A}_{phys;\Omega})
			=\Srel(\Psi|\Phi;\A_{phys})+O(\epsilon).
		\end{equation}
		An analogous factorization holds for the commutant algebras $\A'_{code}$ and $\A'_{phys}$.
		
		From~\eqref{eq:code-side}--\eqref{eq:phys-side}, it follows that if $\Srel(\widetilde{\Psi}|\widetilde{\Phi};\A_{code})=\Srel(\Psi|\Phi;\A_{phys})$, then the two crossed product relative entropies $\Srel(\widetilde{\Psi}_{\widetilde{f}}|\widetilde{\Phi}_{\widetilde{f}};\widehat{\A}_{code;\widetilde{\Omega}})$ and $\Srel(\Psi_{f}|\Phi_{f};\widehat{\A}_{phys;\Omega})$ differ by at most $O(\epsilon)$. Conversely, if the crossed product relative entropies agree to $O(\epsilon)$, the exact relative entropies in the original algebras must agree because the $O(\epsilon)$ errors on both sides are controlled by the same behavior of \eqref{eq:match-density2} and therefore cancel in $\epsilon\rightarrow0$ limit. 
		
		\item (2) $\Leftrightarrow$ (3). Since both $\widehat{\A}_{code;\widetilde{\Omega}}$ and $\widehat{\A}_{phys;\Omega}$ are type~II  factors, theorem~\ref{thm:alg} applies and gives an exact equivalence (1)$\vert_{\epsilon=0}$ $\Leftrightarrow$ (2)$\vert_{\epsilon=0}$ relations, which corresponds to the case that the observer decouples from the quantum field in the semiclassical limit. Each side deviates from its $\epsilon=0$ limit by $O(\epsilon)$ errors that are controlled by the same parameter according to \eqref{eq:match-density2}, so the $O(\epsilon)$-approximate versions are also equivalent.
	\end{itemize}

\end{proof}

To this end, combining the theorems \ref{thm:alg} and \ref{thm:alg2}, we obtain the complete algebraic reconstruction theorem \ref{thm:complete alg} as stated in the introduction section.

\section{Discussions}\label{sec:discussions}

In this work, we first derived several useful relations between operators in the crossed product algebra and that in the original algebra within the semiclassical description, and proved that the relative entropy in the crossed product algebra factorizes into the relative entropies in the original algebra and between the observers, respectively. Second, by applying the crossed product algebra to the algebraic reconstruction theorem \ref{thm:alg} for type I/II factors, we promote the factors to be generic and define the corresponding holographic crossed product algebras such that semiclassically the algebraic RT formula in type III cases are included in the theorem \ref{thm:alg2}, giving a complete algebraic reconstruction theorem \ref{thm:complete alg}.

Readers may be confused about in what sense do the algebraic RT formulae represent the algebraic generalization of the familiar RT formula, as naively there exists no area terms appearing. On the one hand, we would like to emphasize that the naive RT formula is divergent on both sides due to the fact that we are dealing with quantum field theories with type III factors, such that the usual von Neumann entropies are in fact mathematically illegal. Nevertheless, the UV behavior of the area term on RHS coincides with the leading contribution of that of the boundary von Neumann entropy on LHS. On the other hand, the algebraic von Neumann entropy in semiclassical quantum gravity actually corresponds
to the generalized entropy instead of the naive von Neumann entropy of the matter field \cite{Kudler-Flam:2023qfl,Jensen:2023yxy,Faulkner:2024gst}. E.g. authors in \cite{Jensen:2023yxy} have argued that under certain geometric constraints, the algebraic entropy for a semiclassical bulk state $\Phi$ over some part $\Sigma$ of a Cauchy slice $\Sigma_c=\Sigma \cup \bar{\Sigma}$ takes the form of
\begin{equation}\label{eq:Sgen}
	S\left(\rho_{{\Phi}_f}\right)=\left\langle\frac{A_\gamma}{4 G_N}\right\rangle_{{\Phi}_f}+S_{\Phi}^{\mathrm{QFT}}+S_f^{\mathrm{obs}}+\text {const.}=S_{\text {gen}}+\text {const.},
\end{equation}
where $S_{\Phi}^{\mathrm{QFT}}=-\operatorname{Tr}\left(\rho_{\Phi} \log \rho_{\Phi}\right)$ is the usual von Neumann entropy of $\Phi$, and $S_f^{\mathrm{obs}}=-\int_{-\infty}^{\infty} d q|f(q)|^2 \log |f(q)|^2$ is the entropy associated with the probability distribution derived from the observer’s wavefunction, and $A_\gamma$ corresponds to the area of the entangling surface $\gamma=\Sigma\cap\hat{\Sigma}$. In this sense, one can regard the crossed product construction for the entanglement entropy as an algebraic ``regulator'' \footnote{See also \cite{Kudler-Flam:2023hkl} where authors have formally introduced the crossed product construction as a covariant regulator in a quantum field theory under a general curved spacetime.}. Naively, if one can apply \eqref{eq:Sgen} to both the bulk and the boundary in the holographic setting, we expect that in type III cases the algebraic RT formulae take the form of
\begin{equation}\label{eq:alg RT2}
S_{\Psi}^{\mathrm{QFT}}=\left\{\left\langle\frac{A_\gamma}{4 G_N}\right\rangle_{{\widetilde{\Psi}}_f}-\left\langle\frac{A_{\partial\gamma}}{4 G_N}\right\rangle_{{\Psi}_f}\right\}+S_{\widetilde{\Psi}}^{\mathrm{QFT}}+\text{const.},
\end{equation}
where $\gamma$ is the entangling surface between the region associated to $\widetilde{\Psi}$ and its complementary in the bulk while $\partial \gamma$ is the entangling surface between the region associated to $\Psi$ and its complementary on the boundary, and one can see that the area term appears. Despite the similar form, it is not yet known how to generalize \eqref{eq:Sgen} to the holographic framework \footnote{See also section 6.2 in \cite{Jensen:2023yxy}.}, and the extremization procedure as in the Lewkowycz-Maldacena (LM) method \cite{Lewkowycz:2013nqa} to derive \eqref{eq:alg RT2} is also worth further exploration. These topics are certainly rich in physical significance, but they are far beyond the scope of the work, and we leave it for future investigation.

\appendix

\section*{}

\acknowledgments

HZ thanks Qiang Wen, Mingshuai Xu, Yiwei Zhong for helpful discussions. HZ is supported by SEU Innovation Capability Enhancement Plan for Doctoral Students (Grant No. CXJH\_SEU 24137) and the Shing-Tung Yau Center of Southeast University.

\bibliographystyle{JHEP}
\bibliography{bib}

\providecommand{\href}[2]{#2}\begingroup\raggedright\begin{thebibliography}{10}

\bibitem{Witten:2021unn}
E.~Witten, \emph{{Gravity and the crossed product}},
  \href{https://doi.org/10.1007/JHEP10(2022)008}{\emph{JHEP} {\bfseries 10}
  (2022) 008} [\href{https://arxiv.org/abs/2112.12828}{{\ttfamily
  2112.12828}}].

\bibitem{Chandrasekaran:2022cip}
V.~Chandrasekaran, R.~Longo, G.~Penington and E.~Witten, \emph{{An algebra of
  observables for de Sitter space}},
  \href{https://doi.org/10.1007/JHEP02(2023)082}{\emph{JHEP} {\bfseries 02}
  (2023) 082} [\href{https://arxiv.org/abs/2206.10780}{{\ttfamily
  2206.10780}}].

\bibitem{Jensen:2023yxy}
K.~Jensen, J.~Sorce and A.J.~Speranza, \emph{{Generalized entropy for general
  subregions in quantum gravity}},
  \href{https://doi.org/10.1007/JHEP12(2023)020}{\emph{JHEP} {\bfseries 12}
  (2023) 020} [\href{https://arxiv.org/abs/2306.01837}{{\ttfamily
  2306.01837}}].

\bibitem{Kudler-Flam:2023hkl}
J.~Kudler-Flam, S.~Leutheusser, A.A.~Rahman, G.~Satishchandran and
  A.J.~Speranza, \emph{{Covariant regulator for entanglement entropy: Proofs of
  the Bekenstein bound and the quantum null energy condition}},
  \href{https://doi.org/10.1103/PhysRevD.111.105001}{\emph{Phys. Rev. D}
  {\bfseries 111} (2025) 105001}
  [\href{https://arxiv.org/abs/2312.07646}{{\ttfamily 2312.07646}}].

\bibitem{Colafranceschi:2023moh}
E.~Colafranceschi, X.~Dong, D.~Marolf and Z.~Wang, \emph{{Algebras and Hilbert
  spaces from gravitational path integrals. Understanding Ryu-Takayanagi/HRT as
  entropy without AdS/CFT}},
  \href{https://doi.org/10.1007/JHEP10(2024)063}{\emph{JHEP} {\bfseries 10}
  (2024) 063} [\href{https://arxiv.org/abs/2310.02189}{{\ttfamily
  2310.02189}}].

\bibitem{Faulkner:2024gst}
T.~Faulkner and A.J.~Speranza, \emph{{Gravitational algebras and the
  generalized second law}},
  \href{https://doi.org/10.1007/JHEP11(2024)099}{\emph{JHEP} {\bfseries 11}
  (2024) 099} [\href{https://arxiv.org/abs/2405.00847}{{\ttfamily
  2405.00847}}].

\bibitem{Kudler-Flam:2023qfl}
J.~Kudler-Flam, S.~Leutheusser and G.~Satishchandran, \emph{{Generalized black
  hole entropy is von Neumann entropy}},
  \href{https://doi.org/10.1103/PhysRevD.111.025013}{\emph{Phys. Rev. D}
  {\bfseries 111} (2025) 025013}
  [\href{https://arxiv.org/abs/2309.15897}{{\ttfamily 2309.15897}}].

\bibitem{Kudler-Flam:2024psh}
J.~Kudler-Flam, S.~Leutheusser and G.~Satishchandran, \emph{{Algebraic
  Observational Cosmology}},
  \href{https://arxiv.org/abs/2406.01669}{{\ttfamily 2406.01669}}.

\bibitem{AliAhmad:2023etg}
S.~Ali~Ahmad and R.~Jefferson, \emph{{Crossed product algebras and generalized
  entropy for subregions}},
  \href{https://doi.org/10.21468/SciPostPhysCore.7.2.020}{\emph{SciPost Phys.
  Core} {\bfseries 7} (2024) 020}
  [\href{https://arxiv.org/abs/2306.07323}{{\ttfamily 2306.07323}}].

\bibitem{Leutheusser:2021qhd}
S.~Leutheusser and H.~Liu, \emph{{Causal connectability between quantum systems
  and the black hole interior in holographic duality}},
  \href{https://doi.org/10.1103/PhysRevD.108.086019}{\emph{Phys. Rev. D}
  {\bfseries 108} (2023) 086019}
  [\href{https://arxiv.org/abs/2110.05497}{{\ttfamily 2110.05497}}].

\bibitem{Leutheusser:2021frk}
S.A.W.~Leutheusser and H.~Liu, \emph{{Emergent Times in Holographic Duality}},
  \href{https://doi.org/10.1103/PhysRevD.108.086020}{\emph{Phys. Rev. D}
  {\bfseries 108} (2023) 086020}
  [\href{https://arxiv.org/abs/2112.12156}{{\ttfamily 2112.12156}}].

\bibitem{Leutheusser:2022bgi}
S.~Leutheusser and H.~Liu, \emph{{Subregion-subalgebra duality: Emergence of
  space and time in holography}},
  \href{https://doi.org/10.1103/PhysRevD.111.066021}{\emph{Phys. Rev. D}
  {\bfseries 111} (2025) 066021}
  [\href{https://arxiv.org/abs/2212.13266}{{\ttfamily 2212.13266}}].

\bibitem{Engelhardt:2023xer}
N.~Engelhardt and H.~Liu, \emph{{Algebraic ER=EPR and complexity transfer}},
  \href{https://doi.org/10.1007/JHEP07(2024)013}{\emph{JHEP} {\bfseries 07}
  (2024) 013} [\href{https://arxiv.org/abs/2311.04281}{{\ttfamily
  2311.04281}}].

\bibitem{Yngvason:2004uh}
J.~Yngvason, \emph{{The Role of type III factors in quantum field theory}},
  \href{https://doi.org/10.1016/S0034-4877(05)80009-6}{\emph{Rept. Math. Phys.}
  {\bfseries 55} (2005) 135}
  [\href{https://arxiv.org/abs/math-ph/0411058}{{\ttfamily math-ph/0411058}}].

\bibitem{Fredenhagen:1984dc}
K.~Fredenhagen, \emph{{On the Modular Structure of Local Algebras of
  Observables}}, \href{https://doi.org/10.1007/BF01206179}{\emph{Commun. Math.
  Phys.} {\bfseries 97} (1985) 79}.

\bibitem{Araki:1963klf}
H.~Araki, \emph{{A Lattice of Von Neumann Algebras Associated with the Quantum
  Theory of a Free Bose Field}},
  \href{https://doi.org/10.1063/1.1703912}{\emph{J. Math. Phys.} {\bfseries 4}
  (1963) 1343}.

\bibitem{Araki:1964lyc}
H.~Araki, \emph{{Type of von Neumann Algebra Associated with Free Field}},
  \href{https://doi.org/10.1143/ptp.32.956}{\emph{Prog. Theor. Phys.}
  {\bfseries 32} (1964) 956}.

\bibitem{Witten:2018zxz}
E.~Witten, \emph{{APS Medal for Exceptional Achievement in Research: Invited
  article on entanglement properties of quantum field theory}},
  \href{https://doi.org/10.1103/RevModPhys.90.045003}{\emph{Rev. Mod. Phys.}
  {\bfseries 90} (2018) 045003}
  [\href{https://arxiv.org/abs/1803.04993}{{\ttfamily 1803.04993}}].

\bibitem{Chandrasekaran:2022eqq}
V.~Chandrasekaran, G.~Penington and E.~Witten, \emph{{Large N algebras and
  generalized entropy}},
  \href{https://doi.org/10.1007/JHEP04(2023)009}{\emph{JHEP} {\bfseries 04}
  (2023) 009} [\href{https://arxiv.org/abs/2209.10454}{{\ttfamily
  2209.10454}}].

\bibitem{Penington:2023dql}
G.~Penington and E.~Witten, \emph{{Algebras and States in JT Gravity}},
  \href{https://arxiv.org/abs/2301.07257}{{\ttfamily 2301.07257}}.

\bibitem{Kolchmeyer:2023gwa}
D.K.~Kolchmeyer, \emph{{von Neumann algebras in JT gravity}},
  \href{https://doi.org/10.1007/JHEP06(2023)067}{\emph{JHEP} {\bfseries 06}
  (2023) 067} [\href{https://arxiv.org/abs/2303.04701}{{\ttfamily
  2303.04701}}].

\bibitem{Klinger:2026tws}
M.S.~Klinger, J.~Kudler-Flam and G.~Satishchandran, \emph{{Generalized Entropy
  is von Neumann Entropy II: The complete symmetry group and edge modes}},
  \href{https://arxiv.org/abs/2601.07910}{{\ttfamily 2601.07910}}.

\bibitem{Lin:2022rbf}
H.W.~Lin, \emph{{The bulk Hilbert space of double scaled SYK}},
  \href{https://doi.org/10.1007/JHEP11(2022)060}{\emph{JHEP} {\bfseries 11}
  (2022) 060} [\href{https://arxiv.org/abs/2208.07032}{{\ttfamily
  2208.07032}}].

\bibitem{Witten:2023qsv}
E.~Witten, \emph{{Algebras, regions, and observers.}},
  \href{https://doi.org/10.1090/pspum/107/01954}{\emph{Proc. Symp. Pure Math.}
  {\bfseries 107} (2024) 247}
  [\href{https://arxiv.org/abs/2303.02837}{{\ttfamily 2303.02837}}].

\bibitem{DeVuyst:2024khu}
J.~De~Vuyst, S.~Eccles, P.A.~Hoehn and J.~Kirklin, \emph{{Gravitational entropy
  is observer-dependent}},
  \href{https://doi.org/10.1007/JHEP07(2025)146}{\emph{JHEP} {\bfseries 07}
  (2025) 146} [\href{https://arxiv.org/abs/2405.00114}{{\ttfamily
  2405.00114}}].

\bibitem{Sorce:2024zme}
J.~Sorce, \emph{{Analyticity and the Unruh effect: a study of local modular
  flow}}, \href{https://doi.org/10.1007/JHEP09(2024)040}{\emph{JHEP} {\bfseries
  24} (2024) 040} [\href{https://arxiv.org/abs/2403.18937}{{\ttfamily
  2403.18937}}].

\bibitem{Fewster:2024pur}
J.C.~Fewster, D.W.~Janssen, L.D.~Loveridge, K.~Rejzner and J.~Waldron,
  \emph{{Quantum Reference Frames, Measurement Schemes and the Type of Local
  Algebras in Quantum Field Theory}},
  \href{https://doi.org/10.1007/s00220-024-05180-7}{\emph{Commun. Math. Phys.}
  {\bfseries 406} (2025) 19}
  [\href{https://arxiv.org/abs/2403.11973}{{\ttfamily 2403.11973}}].

\bibitem{DeVuyst:2024fxc}
J.~De~Vuyst, S.~Eccles, P.A.~Hoehn and J.~Kirklin, \emph{{Crossed products and
  quantum reference frames: on the observer-dependence of gravitational
  entropy}}, \href{https://doi.org/10.1007/JHEP07(2025)063}{\emph{JHEP}
  {\bfseries 07} (2025) 063}
  [\href{https://arxiv.org/abs/2412.15502}{{\ttfamily 2412.15502}}].

\bibitem{Xu:2024hoc}
J.~Xu, \emph{{Von Neumann algebras in double-scaled SYK}},
  \href{https://doi.org/10.1007/JHEP03(2026)016}{\emph{JHEP} {\bfseries 03}
  (2026) 016} [\href{https://arxiv.org/abs/2403.09021}{{\ttfamily
  2403.09021}}].

\bibitem{AliAhmad:2024wja}
S.~Ali~Ahmad, W.~Chemissany, M.S.~Klinger and R.G.~Leigh, \emph{{Quantum
  reference frames from top-down crossed products}},
  \href{https://doi.org/10.1103/PhysRevD.110.065003}{\emph{Phys. Rev. D}
  {\bfseries 110} (2024) 065003}
  [\href{https://arxiv.org/abs/2405.13884}{{\ttfamily 2405.13884}}].

\bibitem{Bahiru:2022oas}
E.~Bahiru, A.~Belin, K.~Papadodimas, G.~Sarosi and N.~Vardian,
  \emph{{State-dressed local operators in the AdS/CFT correspondence}},
  \href{https://doi.org/10.1103/PhysRevD.108.086035}{\emph{Phys. Rev. D}
  {\bfseries 108} (2023) 086035}
  [\href{https://arxiv.org/abs/2209.06845}{{\ttfamily 2209.06845}}].

\bibitem{Klinger:2023auu}
M.S.~Klinger and R.G.~Leigh, \emph{{Crossed products, conditional expectations
  and constraint quantization}},
  \href{https://doi.org/10.1016/j.nuclphysb.2024.116622}{\emph{Nucl. Phys. B}
  {\bfseries 1006} (2024) 116622}
  [\href{https://arxiv.org/abs/2312.16678}{{\ttfamily 2312.16678}}].

\bibitem{Gomez:2023wrq}
C.~Gomez, \emph{{Entanglement, Observers and Cosmology: a view from von Neumann
  Algebras}},  \href{https://arxiv.org/abs/2302.14747}{{\ttfamily 2302.14747}}.

\bibitem{AliAhmad:2024eun}
S.~Ali~Ahmad, M.S.~Klinger and S.~Lin, \emph{{Semifinite von Neumann algebras
  in gauge theory and gravity}},
  \href{https://doi.org/10.1103/PhysRevD.111.045006}{\emph{Phys. Rev. D}
  {\bfseries 111} (2025) 045006}
  [\href{https://arxiv.org/abs/2407.01695}{{\ttfamily 2407.01695}}].

\bibitem{vanderHeijden:2024tdk}
J.~van~der Heijden and E.~Verlinde, \emph{{An operator algebraic approach to
  black hole information}},
  \href{https://doi.org/10.1007/JHEP02(2025)207}{\emph{JHEP} {\bfseries 02}
  (2025) 207} [\href{https://arxiv.org/abs/2408.00071}{{\ttfamily
  2408.00071}}].

\bibitem{AliAhmad:2024saq}
S.~Ali~Ahmad and M.S.~Klinger, \emph{{Emergent geometry from quantum
  probability}}, \href{https://doi.org/10.1103/PhysRevD.111.105015}{\emph{Phys.
  Rev. D} {\bfseries 111} (2025) 105015}
  [\href{https://arxiv.org/abs/2411.07288}{{\ttfamily 2411.07288}}].

\bibitem{AliAhmad:2025oli}
S.~Ali~Ahmad and M.S.~Klinger, \emph{{Extensions from within}},
  \href{https://arxiv.org/abs/2503.02944}{{\ttfamily 2503.02944}}.

\bibitem{Klinger:2025tvg}
M.~Klinger, \emph{{A Theory of Backgrounds and Background Independence}},
  \href{https://arxiv.org/abs/2512.05043}{{\ttfamily 2512.05043}}.

\bibitem{Klinger:2026kqj}
M.S.~Klinger, \emph{{How to have your wormholes and factorize, too}},
  \href{https://arxiv.org/abs/2602.15120}{{\ttfamily 2602.15120}}.

\bibitem{Aguilar-Gutierrez:2023odp}
S.E.~Aguilar-Gutierrez, E.~Bahiru and R.~Esp{\'\i}ndola, \emph{{The
  centaur-algebra of observables}},
  \href{https://doi.org/10.1007/JHEP03(2024)008}{\emph{JHEP} {\bfseries 03}
  (2024) 008} [\href{https://arxiv.org/abs/2307.04233}{{\ttfamily
  2307.04233}}].

\bibitem{Espindola:2026ekv}
R.~Esp{\'\i}ndola and A.F.~Ali, \emph{{Spectral Admissibility of Real Observers
  in Euclidean de Sitter Gravity}},
  \href{https://arxiv.org/abs/2605.30423}{{\ttfamily 2605.30423}}.

\bibitem{Maldacena:1997re}
J.M.~Maldacena, \emph{{The Large N limit of superconformal field theories and
  supergravity}}, \href{https://doi.org/10.4310/ATMP.1998.v2.n2.a1}{\emph{Adv.
  Theor. Math. Phys.} {\bfseries 2} (1998) 231}
  [\href{https://arxiv.org/abs/hep-th/9711200}{{\ttfamily hep-th/9711200}}].

\bibitem{Witten:1998qj}
E.~Witten, \emph{{Anti-de Sitter space and holography}},
  \href{https://doi.org/10.4310/ATMP.1998.v2.n2.a2}{\emph{Adv. Theor. Math.
  Phys.} {\bfseries 2} (1998) 253}
  [\href{https://arxiv.org/abs/hep-th/9802150}{{\ttfamily hep-th/9802150}}].

\bibitem{Gubser:1998bc}
S.S.~Gubser, I.R.~Klebanov and A.M.~Polyakov, \emph{{Gauge theory correlators
  from noncritical string theory}},
  \href{https://doi.org/10.1016/S0370-2693(98)00377-3}{\emph{Phys. Lett. B}
  {\bfseries 428} (1998) 105}
  [\href{https://arxiv.org/abs/hep-th/9802109}{{\ttfamily hep-th/9802109}}].

\bibitem{Ryu:2006bv}
S.~Ryu and T.~Takayanagi, \emph{{Holographic derivation of entanglement entropy
  from AdS/CFT}},
  \href{https://doi.org/10.1103/PhysRevLett.96.181602}{\emph{Phys. Rev. Lett.}
  {\bfseries 96} (2006) 181602}
  [\href{https://arxiv.org/abs/hep-th/0603001}{{\ttfamily hep-th/0603001}}].

\bibitem{Hubeny:2007xt}
V.E.~Hubeny, M.~Rangamani and T.~Takayanagi, \emph{{A Covariant holographic
  entanglement entropy proposal}},
  \href{https://doi.org/10.1088/1126-6708/2007/07/062}{\emph{JHEP} {\bfseries
  07} (2007) 062} [\href{https://arxiv.org/abs/0705.0016}{{\ttfamily
  0705.0016}}].

\bibitem{Casini_2011}
H.~Casini, M.~Huerta and R.C.~Myers, \emph{Towards a derivation of holographic
  entanglement entropy},
  \href{https://doi.org/10.1007/jhep05(2011)036}{\emph{Journal of High Energy
  Physics} {\bfseries 2011} (2011) }.

\bibitem{Lewkowycz:2013nqa}
A.~Lewkowycz and J.~Maldacena, \emph{{Generalized gravitational entropy}},
  \href{https://doi.org/10.1007/JHEP08(2013)090}{\emph{JHEP} {\bfseries 08}
  (2013) 090} [\href{https://arxiv.org/abs/1304.4926}{{\ttfamily 1304.4926}}].

\bibitem{Nishioka:2018khk}
T.~Nishioka, \emph{{Entanglement entropy: holography and renormalization
  group}}, \href{https://doi.org/10.1103/RevModPhys.90.035007}{\emph{Rev. Mod.
  Phys.} {\bfseries 90} (2018) 035007}
  [\href{https://arxiv.org/abs/1801.10352}{{\ttfamily 1801.10352}}].

\bibitem{Faulkner:2013ana}
T.~Faulkner, A.~Lewkowycz and J.~Maldacena, \emph{{Quantum corrections to
  holographic entanglement entropy}},
  \href{https://doi.org/10.1007/JHEP11(2013)074}{\emph{JHEP} {\bfseries 11}
  (2013) 074} [\href{https://arxiv.org/abs/1307.2892}{{\ttfamily 1307.2892}}].

\bibitem{Engelhardt:2014gca}
N.~Engelhardt and A.C.~Wall, \emph{{Quantum Extremal Surfaces: Holographic
  Entanglement Entropy beyond the Classical Regime}},
  \href{https://doi.org/10.1007/JHEP01(2015)073}{\emph{JHEP} {\bfseries 01}
  (2015) 073} [\href{https://arxiv.org/abs/1408.3203}{{\ttfamily 1408.3203}}].

\bibitem{Almheiri_2015}
A.~Almheiri, X.~Dong and D.~Harlow, \emph{Bulk locality and quantum error
  correction in {AdS}/{CFT}},
  \href{https://doi.org/10.1007/jhep04(2015)163}{\emph{Journal of High Energy
  Physics} {\bfseries 2015} (2015) }.

\bibitem{Dong:2016eik}
X.~Dong, D.~Harlow and A.C.~Wall, \emph{{Reconstruction of Bulk Operators
  within the Entanglement Wedge in Gauge-Gravity Duality}},
  \href{https://doi.org/10.1103/PhysRevLett.117.021601}{\emph{Phys. Rev. Lett.}
  {\bfseries 117} (2016) 021601}
  [\href{https://arxiv.org/abs/1601.05416}{{\ttfamily 1601.05416}}].

\bibitem{Harlow:2016vwg}
D.~Harlow, \emph{{The Ryu\textendash{}Takayanagi Formula from Quantum Error
  Correction}}, \href{https://doi.org/10.1007/s00220-017-2904-z}{\emph{Commun.
  Math. Phys.} {\bfseries 354} (2017) 865}
  [\href{https://arxiv.org/abs/1607.03901}{{\ttfamily 1607.03901}}].

\bibitem{Zhong:2024fmn}
H.~Zhong, \emph{{Probing the Page transition via approximate quantum error
  correction}}, \href{https://doi.org/10.1007/JHEP01(2025)086}{\emph{JHEP}
  {\bfseries 01} (2025) 086}
  [\href{https://arxiv.org/abs/2408.15104}{{\ttfamily 2408.15104}}].

\bibitem{Faulkner:2020hzi}
T.~Faulkner, \emph{{The holographic map as a conditional expectation}},
  \href{https://arxiv.org/abs/2008.04810}{{\ttfamily 2008.04810}}.

\bibitem{Akers:2021fut}
C.~Akers and G.~Penington, \emph{{Quantum minimal surfaces from quantum error
  correction}},
  \href{https://doi.org/10.21468/SciPostPhys.12.5.157}{\emph{SciPost Phys.}
  {\bfseries 12} (2022) 157}
  [\href{https://arxiv.org/abs/2109.14618}{{\ttfamily 2109.14618}}].

\bibitem{Gesteau:2023hbq}
E.~Gesteau, \emph{{Large N von Neumann Algebras and the Renormalization of
  Newton{\textquoteright}s Constant}},
  \href{https://doi.org/10.1007/s00220-024-05192-3}{\emph{Commun. Math. Phys.}
  {\bfseries 406} (2025) 40}
  [\href{https://arxiv.org/abs/2302.01938}{{\ttfamily 2302.01938}}].

\bibitem{Czech:2012bh}
B.~Czech, J.L.~Karczmarek, F.~Nogueira and M.~Van~Raamsdonk, \emph{{The Gravity
  Dual of a Density Matrix}},
  \href{https://doi.org/10.1088/0264-9381/29/15/155009}{\emph{Class. Quant.
  Grav.} {\bfseries 29} (2012) 155009}
  [\href{https://arxiv.org/abs/1204.1330}{{\ttfamily 1204.1330}}].

\bibitem{Hamilton:2006az}
A.~Hamilton, D.N.~Kabat, G.~Lifschytz and D.A.~Lowe, \emph{{Holographic
  representation of local bulk operators}},
  \href{https://doi.org/10.1103/PhysRevD.74.066009}{\emph{Phys. Rev. D}
  {\bfseries 74} (2006) 066009}
  [\href{https://arxiv.org/abs/hep-th/0606141}{{\ttfamily hep-th/0606141}}].

\bibitem{Morrison:2014jha}
I.A.~Morrison, \emph{{Boundary-to-bulk maps for AdS causal wedges and the
  Reeh-Schlieder property in holography}},
  \href{https://doi.org/10.1007/JHEP05(2014)053}{\emph{JHEP} {\bfseries 05}
  (2014) 053} [\href{https://arxiv.org/abs/1403.3426}{{\ttfamily 1403.3426}}].

\bibitem{Bousso:2012sj}
R.~Bousso, S.~Leichenauer and V.~Rosenhaus, \emph{{Light-sheets and AdS/CFT}},
  \href{https://doi.org/10.1103/PhysRevD.86.046009}{\emph{Phys. Rev. D}
  {\bfseries 86} (2012) 046009}
  [\href{https://arxiv.org/abs/1203.6619}{{\ttfamily 1203.6619}}].

\bibitem{Bousso:2012mh}
R.~Bousso, B.~Freivogel, S.~Leichenauer, V.~Rosenhaus and C.~Zukowski,
  \emph{{Null Geodesics, Local CFT Operators and AdS/CFT for Subregions}},
  \href{https://doi.org/10.1103/PhysRevD.88.064057}{\emph{Phys. Rev. D}
  {\bfseries 88} (2013) 064057}
  [\href{https://arxiv.org/abs/1209.4641}{{\ttfamily 1209.4641}}].

\bibitem{Hubeny:2012wa}
V.E.~Hubeny and M.~Rangamani, \emph{{Causal Holographic Information}},
  \href{https://doi.org/10.1007/JHEP06(2012)114}{\emph{JHEP} {\bfseries 06}
  (2012) 114} [\href{https://arxiv.org/abs/1204.1698}{{\ttfamily 1204.1698}}].

\bibitem{Wall:2012uf}
A.C.~Wall, \emph{{Maximin Surfaces, and the Strong Subadditivity of the
  Covariant Holographic Entanglement Entropy}},
  \href{https://doi.org/10.1088/0264-9381/31/22/225007}{\emph{Class. Quant.
  Grav.} {\bfseries 31} (2014) 225007}
  [\href{https://arxiv.org/abs/1211.3494}{{\ttfamily 1211.3494}}].

\bibitem{Headrick:2014cta}
M.~Headrick, V.E.~Hubeny, A.~Lawrence and M.~Rangamani, \emph{{Causality \&
  holographic entanglement entropy}},
  \href{https://doi.org/10.1007/JHEP12(2014)162}{\emph{JHEP} {\bfseries 12}
  (2014) 162} [\href{https://arxiv.org/abs/1408.6300}{{\ttfamily 1408.6300}}].

\bibitem{Jafferis:2015del}
D.L.~Jafferis, A.~Lewkowycz, J.~Maldacena and S.J.~Suh, \emph{{Relative entropy
  equals bulk relative entropy}},
  \href{https://doi.org/10.1007/JHEP06(2016)004}{\emph{JHEP} {\bfseries 06}
  (2016) 004} [\href{https://arxiv.org/abs/1512.06431}{{\ttfamily
  1512.06431}}].

\bibitem{Polchinski:1999yd}
J.~Polchinski, L.~Susskind and N.~Toumbas, \emph{{Negative energy,
  superluminosity and holography}},
  \href{https://doi.org/10.1103/PhysRevD.60.084006}{\emph{Phys. Rev. D}
  {\bfseries 60} (1999) 084006}
  [\href{https://arxiv.org/abs/hep-th/9903228}{{\ttfamily hep-th/9903228}}].

\bibitem{Harlow:2018fse}
D.~Harlow, \emph{{TASI Lectures on the Emergence of Bulk Physics in AdS/CFT}},
  \href{https://doi.org/10.22323/1.305.0002}{\emph{PoS} {\bfseries TASI2017}
  (2018) 002} [\href{https://arxiv.org/abs/1802.01040}{{\ttfamily
  1802.01040}}].

\bibitem{Kamal:2019skn}
H.~Kamal and G.~Penington, \emph{{The Ryu-Takayanagi Formula from Quantum Error
  Correction: An Algebraic Treatment of the Boundary CFT}},
  \href{https://arxiv.org/abs/1912.02240}{{\ttfamily 1912.02240}}.

\bibitem{Kang:2018xqy}
M.J.~Kang and D.K.~Kolchmeyer, \emph{{Holographic Relative Entropy in
  Infinite-Dimensional Hilbert Spaces}},
  \href{https://doi.org/10.1007/s00220-022-04627-z}{\emph{Commun. Math. Phys.}
  {\bfseries 400} (2023) 1665}
  [\href{https://arxiv.org/abs/1811.05482}{{\ttfamily 1811.05482}}].

\bibitem{Kang:2019dfi}
M.J.~Kang and D.K.~Kolchmeyer, \emph{{Entanglement wedge reconstruction of
  infinite-dimensional von Neumann algebras using tensor networks}},
  \href{https://doi.org/10.1103/PhysRevD.103.126018}{\emph{Phys. Rev. D}
  {\bfseries 103} (2021) 126018}
  [\href{https://arxiv.org/abs/1910.06328}{{\ttfamily 1910.06328}}].

\bibitem{Xu:2024xbz}
M.~Xu and H.~Zhong, \emph{{Adding the algebraic Ryu-Takayanagi formula to the
  algebraic reconstruction theorem}},
  \href{https://arxiv.org/abs/2411.06361}{{\ttfamily 2411.06361}}.

\bibitem{Crann:2024gkv}
J.~Crann and M.J.~Kang, \emph{{Algebraic approach to spacetime bulk
  reconstruction}},  \href{https://arxiv.org/abs/2412.00298}{{\ttfamily
  2412.00298}}.

\bibitem{Lashkari:2013koa}
N.~Lashkari, M.B.~McDermott and M.~Van~Raamsdonk, \emph{{Gravitational dynamics
  from entanglement 'thermodynamics'}},
  \href{https://doi.org/10.1007/JHEP04(2014)195}{\emph{JHEP} {\bfseries 04}
  (2014) 195} [\href{https://arxiv.org/abs/1308.3716}{{\ttfamily 1308.3716}}].

\bibitem{Faulkner:2013ica}
T.~Faulkner, M.~Guica, T.~Hartman, R.C.~Myers and M.~Van~Raamsdonk,
  \emph{{Gravitation from Entanglement in Holographic CFTs}},
  \href{https://doi.org/10.1007/JHEP03(2014)051}{\emph{JHEP} {\bfseries 03}
  (2014) 051} [\href{https://arxiv.org/abs/1312.7856}{{\ttfamily 1312.7856}}].

\bibitem{Swingle:2014uza}
B.~Swingle and M.~Van~Raamsdonk, \emph{{Universality of Gravity from
  Entanglement}},  \href{https://arxiv.org/abs/1405.2933}{{\ttfamily
  1405.2933}}.

\bibitem{Faulkner:2017tkh}
T.~Faulkner, F.M.~Haehl, E.~Hijano, O.~Parrikar, C.~Rabideau and
  M.~Van~Raamsdonk, \emph{{Nonlinear Gravity from Entanglement in Conformal
  Field Theories}}, \href{https://doi.org/10.1007/JHEP08(2017)057}{\emph{JHEP}
  {\bfseries 08} (2017) 057}
  [\href{https://arxiv.org/abs/1705.03026}{{\ttfamily 1705.03026}}].

\bibitem{Lewkowycz:2018sgn}
A.~Lewkowycz and O.~Parrikar, \emph{{The holographic shape of entanglement and
  Einstein{\textquoteright}s equations}},
  \href{https://doi.org/10.1007/JHEP05(2018)147}{\emph{JHEP} {\bfseries 05}
  (2018) 147} [\href{https://arxiv.org/abs/1802.10103}{{\ttfamily
  1802.10103}}].

\bibitem{Vaughan2009}
V.F.~Jones, ``Von neumann algebras.'' https://math.berkeley.edu/~vfr/MATH20909/
  VonNeumann2009.pdf, 2009.

\bibitem{Sorce:2023fdx}
J.~Sorce, \emph{{Notes on the type classification of von Neumann algebras}},
  \href{https://doi.org/10.1142/S0129055X24300024}{\emph{Rev. Math. Phys.}
  {\bfseries 36} (2024) 2430002}
  [\href{https://arxiv.org/abs/2302.01958}{{\ttfamily 2302.01958}}].

\bibitem{reed1972methods}
M.~Reed, B.~Simon, B.~Simon and B.~Simon, \emph{Methods of modern mathematical
  physics}, vol.~1, Academic press New York (1972).

\bibitem{takesaki2006tomita}
M.~Takesaki, \emph{Tomita's theory of modular Hilbert algebras and its
  applications}, vol.~128, Springer (2006).

\bibitem{Liu:2025krl}
H.~Liu, \emph{{Lectures on entanglement, von Neumann algebras, and emergence of
  spacetime}},  in \emph{{Theoretical Advanced Study Institute in Elementary
  Particle Physics 2023}: {Aspects of Symmetry}}, 10, 2025
  [\href{https://arxiv.org/abs/2510.07017}{{\ttfamily 2510.07017}}].

\bibitem{Zhong:2026syt}
H.~Zhong, \emph{{An algebraic description of the Page transition}},
  \href{https://doi.org/10.1007/JHEP04(2026)160}{\emph{JHEP} {\bfseries 04}
  (2026) 160} [\href{https://arxiv.org/abs/2601.11363}{{\ttfamily
  2601.11363}}].

\bibitem{araki1975relative}
H.~Araki, \emph{Relative entropy of states of von neumann algebras},
  {\emph{Publications of the Research Institute for Mathematical Sciences}
  {\bfseries 11} (1975) 809}.

\bibitem{araki1975inequalities}
H.~Araki, \emph{Inequalities in von neumann algebras}, {\emph{Les rencontres
  physiciens-math{\'e}maticiens de Strasbourg-RCP25} {\bfseries 22} (1975) 1}.

\bibitem{segal1960note}
I.E.~Segal, \emph{A note on the concept of entropy}, {\emph{Journal of
  Mathematics and Mechanics} (1960) 623}.

\bibitem{ohya2004quantum}
M.~Ohya and D.~Petz, \emph{Quantum entropy and its use}, Springer Science \&
  Business Media (2004).

\bibitem{Longo:2022lod}
R.~Longo and E.~Witten, \emph{{A note on continuous entropy}},
  \href{https://doi.org/10.4310/PAMQ.2023.v19.n5.a5}{\emph{Pure Appl. Math.
  Quart.} {\bfseries 19} (2023) 2501}
  [\href{https://arxiv.org/abs/2202.03357}{{\ttfamily 2202.03357}}].

\bibitem{takesaki1973duality}
M.~Takesaki, \emph{Duality for crossed products and the structure of von
  neumann algebras of type iii}, .

\bibitem{connes1973classification}
A.~Connes, \emph{Une classification des facteurs de type iii},  in
  \emph{Annales scientifiques de l'{\'E}cole Normale Sup{\'e}rieure}, vol.~6,
  pp.~133--252, 1973.

\bibitem{connes2008noncommutative}
A.~Connes, J.~Cuntz and M.A.~Rieffel, \emph{Noncommutative geometry},
  {\emph{Oberwolfach Reports} {\bfseries 4} (2008) 2543}.

\end{thebibliography}\endgroup
\end{document}